\definecolor{strcolor}{rgb}{0.6, 0.2, 0.6}
\definecolor{commentcolor}{rgb}{0.3125, 0.5, 0.3125}
\definecolor{keycol}{rgb}{0, 0, 1}
\newcommand{\rev}[1]{{#1}}
\newcommand {\bea}{\begin{eqnarray}}
\newcommand {\eea}{\end{eqnarray}}
\newtheorem{prop}{Proposition}
\def\blot{\quad \mbox{$\vcenter{ \vbox{ \hrule height.4pt
				\hbox{\vrule width.4pt height.9ex \kern.9ex \vrule width.4pt}
				\hrule height.4pt}}$}}
\gdef\AQ#1{}
\gdef\CQ#1{}
\newtheorem{remark}{Remark}
\newtheorem{assumption}{Assumption}
\newcommand{\arctt}{\boldsymbol{T}}
\newcommand{\pmbeta}{\boldsymbol{b}}
\newcommand{\rawnyc}{O}
\newcommand{\arcll}{\boldsymbol{L}}
\newcommand{\hbt}{{\hat{\boldsymbol{t}}}}
\newcommand{\hbx}{{\boldsymbol{x}}}
\title{Arc travel time and route choice model estimation subsumed}
\author[1]{Sobhan Mohammadpour}
\author[2]{Emma Frejinger}
\affil[1]{Massachusetts Institute of Technology}
\affil[2]{Université de Montréal}
\title{Arc Travel Time and Route Choice Model Estimation Subsumed}
\date{\today} 
\begin{document}

\maketitle
\begin{abstract}
    We address the problem of simultaneously estimating arc travel times in a network \emph{and} parameters of route choice models for strategic and tactical network planning purposes. Hitherto, these interdependent tasks have been approached separately \rev{in the literature on road traffic networks}. We illustrate that ignoring this interdependence can lead to \rev{erroneous} route choice model parameter estimates. We propose a method for maximum likelihood estimation to solve the simultaneous estimation problem that is applicable to any differentiable route choice model. Moreover, our approach allows to naturally mix observations at varying levels of granularity, including noisy or partial path data. Numerical results based on real taxi data from New York City show strong performance of our method, even in comparison to a benchmark method focused solely on arc travel time estimation.
\end{abstract}

\section{Introduction} \label{sec:intro}
An extensive body of literature in transportation science focuses on strategic and tactical planning to manage demand, for instance, through pricing, and to optimize network performance by addressing problems such as network design or facility location. For example, \cite{GilbertEtAl15} optimize prices in a network where users are assigned to paths according to a discrete choice model, assuming they minimize the cost, a function of the price they pay as well as other fixed attributes like travel time.  \cite{WeiEtAl22} examine a transit planning problem by considering competition from ride-hailing services and the effects of congested travel times. Additionally, \cite{OsorAtas21} propose a simulation-based approach for toll optimization, employing a simulator to accurately represent user behavior.

Common to these studies, and many others, is \emph{(i)} the need for \rev{travel time estimates} and \emph{(ii)} a model of user behaviour that distributes flow in the network, i.e., a route choice model. We provide an illustration in Figure~\ref{fig:processoverview} where $\arctt$ denotes a matrix whose elements represent travel time on an arcs in a network. Furthermore, $\mathcal{P}(o, d; \hat \arctt, \pmbeta)$ denotes a route choice model for any given origin $o$ and destination $d$, that depends on travel times (as well as other attributes) and a vector of parameters $\pmbeta$. In the right-hand side of the figure, we provide examples of strategic and tactical network planning problems where \rev{estimated} travel times can be considered fixed and given $\hat \arctt$, as in \cite{GilbertEtAl15}. Alternatively, travel time can vary with traffic flow within an assignment/equilibrium model, or a simulator as exemplified in \cite{OsorAtas21} and \cite{WeiEtAl22}. 

\rev{Strategic and tactical planning problems are challenging to solve namely because the demand distributions -- here traffic flow distributions -- depend on supply decisions \citep{FrejHewi25}.}
\rev{In our context, such} distributions are given by a route choice model whose parameter values $\hat \pmbeta$ have been estimated based on data. \rev{These are typically structured models, such as discrete choice models, to allow for interpretable out-of-distribution generalization.} The two estimation problems -- travel time and route choice model estimation -- are interdependent. As illustrated in Figure~\ref{fig:processoverview}, the route choice model is based on network attributes and travel time (or speed) is known to be an important explanatory factor. \rev{In turn, if paths are not perfectly observed} the travel time estimation task requires a route choice assumption \rev{to match trip observations to the network representation}.

This leads us to the focus of this paper which is illustrated with a dotted frame in the figure. Consider a network of nodes and arcs representing a given congested geographical area. \rev{We aim to simultaneously estimate arc travel times and route choice model parameters for large-scale road traffic networks using data at different levels of granularity. At the lowest level of granularity, we assume that trip observations contain origin and destination information along with travel time (i.e., the path is latent). Higher levels of granularity assume, in addition, GPS sequence information that captures fully or partially the chosen paths. The resulting travel time estimates and route choice model are intended to support strategic and tactical network planning. It is therefore crucial to ensure that arc travel times are additive and that their sums lead to accurate path travel time estimates.} 

\begin{figure}
    \centering
    \includegraphics[width=0.8\textwidth]{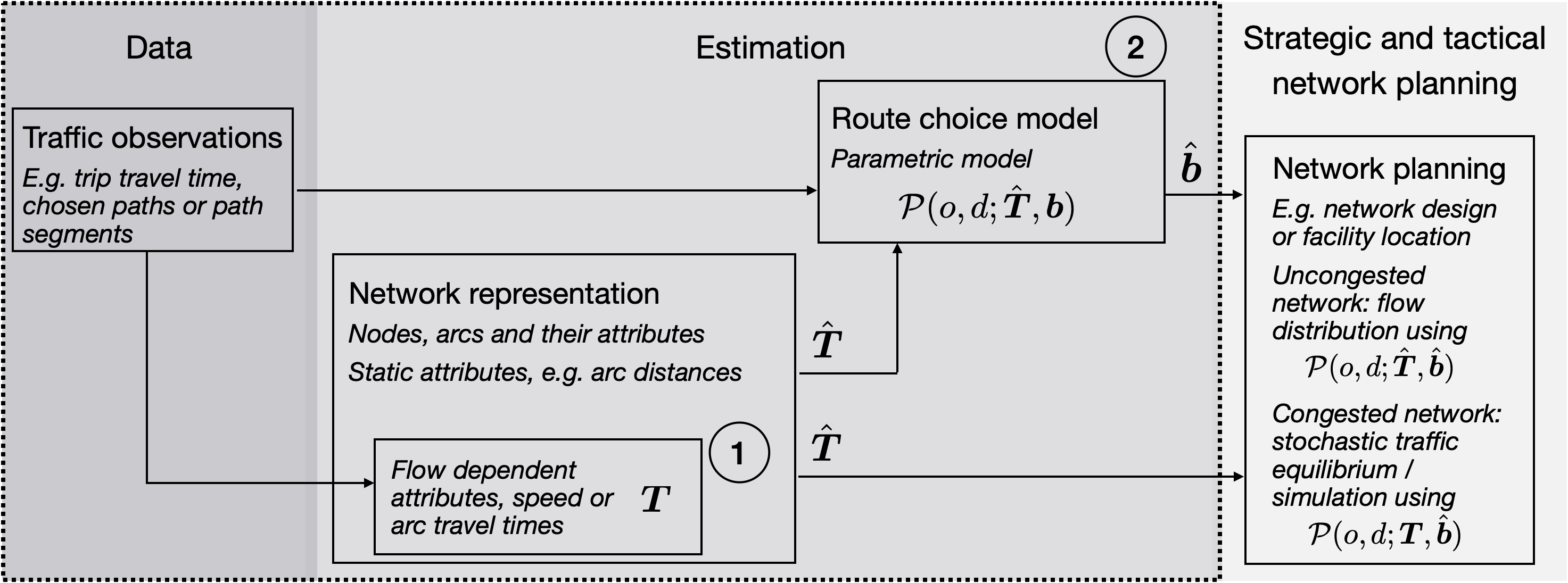}
    \caption{Overview of data, travel time and route choice model estimation and their link to network planning problems}
    \label{fig:processoverview}
\end{figure}

\rev{The interdependence between travel time and route choice model estimation has been recognized in the literature on passenger flow assignment in public transport networks \citep[e.g.,][]{SunEtAl15, MoEtAl23, ChenEtAl25}. It arises because Automatic Fare Collection (AFC) data includes only tap-in at tap-out times and the origin and destination stations, respectively. The primary objective is then to infer the latent path choice for each trip observation. Our work is, to the best of our knowledge, the first to tackle this simultaneous estimation problem for road traffic network with a double primary objective: estimate arc travel times in the network and parameters of a stochastic route choice model. We outline the related difficulties by following closely the five key challenges described in \cite{ChenEtAl25}.}

\rev{The \emph{first} challenge pertains to \emph{unobserved path choices}. While chosen paths are latent in public transport AFC data, this is not necessarily the case in road traffic networks thanks to GPS tracking. It is, however, common to have imperfect path information (e.g., imprecise or low-frequency GPS sequences) or to lack detailed GPS data for privacy reasons. An approach that can handle data at different levels of granularity is therefore useful. The \emph{second} challenge is that \emph{network travel times are unobserved and dynamic}. Even large GPS probe vehicle data sets, and networks with sensors, cannot cover an entire large-scale network over time. The \emph{third} challenge concerns \emph{heterogeneous path choice preferences}. These vary across the population and may even change dynamically for the same individual (e.g., depending on trip purpose). A \emph{fourth} challenge arises from the \emph{aleatoric and epistemic uncertainty} associated with demand and supply in transport networks. This setting calls for probabilistic approaches. Finally, the \emph{fifth} challenge is the \emph{high dimensionality} of the estimation problems.}

\rev{\cite{ChenEtAl25} propose a Bayesian hierarchical model to address these challenges for passenger trip assignment in metro networks. Unlike other works that focus on static settings, they estimate dynamic network costs and a multinomial logit path choice model with spatiotemporally varying coefficients. Their estimation problem is high-dimensional even though they decompose observed trip travel times into only four components (access, in-vehicle, transfer, and egress time). Road traffic networks do not allow for such a decomposition, which makes the high dimensionality of the problem a major challenge.}

\rev{There is extensive literature on road traffic networks that treats the two estimation problems separately. Existing methods can then be used in what we refer to as a} \emph{two-step approach} (illustrated with the numbers in \rev{Figure~\ref{fig:processoverview}}). It consists of first estimating travel times, and then, treating those estimates as fixed, \rev{estimating the route choice model parameters in a second step.} 

\rev{As in the passenger flow assignment problem, if paths are unobserved, then the travel time estimation problem requires a path choice assumption in the first step.} Closest to our work is the estimation approach in \cite{bertsimas2019travel}. Like us, they aim to estimate travel times on all arcs in a network \rev{for tactical and strategic planning, using only origin, destination, and trip travel time information}. \rev{They assume shortest path choice (i.e., a deterministic choice model) and report results for the New York City taxi data. Their approach is designed for data where paths are unobserved, and it scales  to large networks. Hence, it addresses the first and fifth challenges, but only partially the second challenge (due to static network assumption). It does not address the third challenge, since shortest-path choice is assumed, and consequently it does not fully address the fourth (partially ignoring uncertainty).}

\rev{As we further detail in Section~\ref{section:litreview}, when path observations are available, a wide range of route choice models can be used in the second step. The options are more restricted when paths are only partially observed or when only arc flows are available. To the best of our knowledge, no prior work has estimated route choice models using only origin, destination, and trip duration information.}

The paper offers the following contributions:
\begin{itemize}
\item \rev{We introduce the problem of simultaneously estimating arc travel times and route choice model parameters for road traffic networks. Whereas this problem has been studied for public transport networks, related approaches \citep[e.g.,][]{SunEtAl15, MoEtAl23, ChenEtAl25} do not directly apply to large-scale road networks. Indeed, in this case travel time cannot be naturally decomposed into a small number of components, and it is not possible to leverage schedule information.}
\item We propose a methodology to estimate arc travel times and parameters of differentiable route choice models simultaneously. \rev{We do not restrict the approach to a specific type of route choice model, like multinomial logit}. We show that under weak assumptions, we can conveniently formulate a maximum likelihood estimator for the \rev{simultaneous} estimation problem. Furthermore, we show that the solution to the model in \citet{bertsimas2019travel}, under certain assumptions, results in a maximum likelihood estimate. \rev{Similar to \citet{bertsimas2019travel}, we assume a static network over a given time period (e.g., morning peak hours). Unlike them, we allow for heterogeneous path choice preferences.}  
\item An attractive property of our proposed joint likelihood is that it can naturally mix observations of different levels of granularity. This can include incomplete or noisy data, and generalizes the methodology of \cite{mai2023estimation} \rev{for estimation of route choice models} with partial path observations.
\item Numerical results based on New York City taxi data show that our method is fast and performant. First, focusing on the travel time estimation task only, measured by mean-squared log error, we reach a performance and computing times comparable to those of \citet{bertsimas2019travel} even though we solve the \rev{simultaneous} estimation problem. Focusing on both tasks, results on synthetic data illustrate that our travel time estimates better reproduce ground truth values compared to those obtained using the two-step approach.
\end{itemize}

\rev{With respect to the aforementioned challenges, we allow for data at different levels of granularity including unobserved path choices hence addressing the first challenge. We partially address the second challenge by estimating network travel times. However, we assume a static network representation for a given time period. This is aligned with \cite{bertsimas2019travel} but is more restrictive than \cite{ChenEtAl25}. Our approach is applicable to any route choice model that is differentiable. Allowing stochastic choice models and opening up for modeling heterogeneous path choice preferences (third challenge) is a key contribution in comparison to \cite{bertsimas2019travel}. Similar to works on passenger flow assignment \citep[][]{ChenEtAl25,MoEtAl23,SunEtAl15}, we use a multinomial logit model in our results \citep[in our case a recursive logit model to allow for unrestricted choice sets][]{fosgerau2013link}. Our approach is probabilistic and scales to large networks, hence, to our knowledge, we are the first to address the fourth and fifth challenges for road traffic networks.} 

There are multiple implications of this work. \cite{ramos2020route} show empirically that \rev{inaccuracies in} travel time estimates $\hat \arctt$ used as input for route choice modeling \rev{may be absorbed by} route choice model parameters $\hat \pmbeta$. Since estimated route choice models are employed to address planning problems, \rev{erroneous} parameter estimates can adversely affect the resulting solutions. For instance, overestimating traffic flow in a specific area of the network can result in unnecessary infrastructure investments. Our proposed simultaneous estimation approach \rev{addresses the inaccuracies arising from} the interdependence between the two estimation problems. Moreover, since our approach allows mixing observations at different levels of granularity, it reduces the data requirements and mitigates privacy concerns related to tracking the position of individual drivers. To the best of our knowledge, this is the first work to estimate a route choice model for a road traffic network using only travel time observations.

The paper is structured as follows. In Section~\ref{section:litreview}, we describe related work on travel time estimation with a particular focus on \citet{bertsimas2019travel}, along with a background on route choice modeling.
In Section~\ref{section:motivation}, we provide an illustrative example, and in Section~\ref{section:methodology}, we introduce our method. We report numerical experiments in Section~\ref{section:experiments}, and finally, we provide concluding remarks in Section~\ref{sec:Conclusion}.

\section{Related Work}\label{section:litreview}
\rev{Given our focus on the simultaneous estimation problem, our work lies at the intersection of travel time estimation and route choice model estimation. As highlighted in the introductory section, at this intersection we are only aware of studies on passenger flow assignment \citep[e.g.,][]{ChenEtAl25,MoEtAl23,SunEtAl15}. Their primary objective is to assign trip observations to paths in the public transport network, but they also estimate travel times (or generalized costs) in the process. As highlighted in the introductory section, whereas we share the problem definition, the structure of public transport networks differs from that of road traffic networks. Furthermore, GPS data may provide more granular trip observations compared to AFC data. Thus, the approaches on passenger flow assignment do not directly apply to our setting.}

\rev{Therefore, this section reviews the extensive literature that treats the two estimation problems in isolation. We begin with a brief overview of travel time estimation to motivate why we consider \cite{bertsimas2019travel} the work most closely related to ours. We benchmark our travel time estimation results against those obtained using their approach, and for this reason, we describe their methodology in Section~\ref{sec:litRevBertsimas}. Our approach applies to differentiable route choice models. In Section~\ref{section:pathchoicemodel}, we review route choice models and highlight the complementarity of our work.}

\rev{Travel time estimation is a fundamental problem for advanced traveler information systems and intelligent transport systems, and the related literature is rich \citep{shaygan2022traffic}. The problem consists of predicting the travel time for \emph{a given trip}. There are different variants: predict travel time for a given path or for a specific road segment \citep{YanEtAl24}, or predict travel time for a pair of locations without knowledge of the path. Since the predictions are trip specific, the prediction algorithms may leverage contextual features such as weather and departure time. If departure time is known, then this problem corresponds to predicting the estimated time of arrival (ETA).} 

\rev{Closest to our work are approaches that predict travel time for given road segments and then aggregate travel time predictions to construct a path-level prediction. In our case, we focus on the highest level of geographical granularity, where a segment corresponds to an arc in the network representation, and predictions cover \emph{all} arcs. This differs from \emph{trip} travel time predictions that do not necessarily require this level of geographical granularity and scope. Instead, it is possible to directly predict travel for the entire path, for instance, using deep learning algorithms \citep[e.g.,][]{JindalEtAl17,YuanEtAl20}, or for segment or super-segments \citep[graph neural networks are deployed for ETA prediction in Google Maps,][]{Derrow-PinionEtAl21,XueEtAl25}. \cite{YanEtAl24} provide a theoretical analysis of different methods and conclude that segment-level granularity is often of first-order importance for predictive accuracy.}

\rev{Deep learning is also used for probabilistic regression, which provides uncertainty estimates in addition to point predictions. \cite{xu2025link} provide a recent literature review and propose an approach dubbed ProbTTE. Using data in the form of GPS sequences, ProbTTE can characterize both inter-trip and intra-trip correlations by modeling trip-level arc travel times with a Gaussian hierarchical model. Unlike previous works, and similar to us, they sum arc travel times to estimate path travel times, i.e., their trip representation is a path-based sum of arc representations. Other works propose approaches to capture spatial and/or temporal correlations but do not impose linear arc travel times \citep[for two examples, see][]{SunKim21,ZhouEtAl23}. Moreover, such approaches often require sequences of observed locations and times for a given trip, and are therefore limited in their ability to handle data where only total trip travel times are observed.}

\rev{Compared to the aforementioned literature, which focuses on predicting travel time for a given trip, relatively few works address the same estimation task as ours: namely, estimating arc travel times across an entire network without assuming access to path observations, and while ensuring that arc travel time estimates can be summed to recover path travel times. \cite{ZhanEtAl13}, \cite{bertsimas2019travel}, and \cite{GhanKouv22} study this problem using the New York City taxi dataset. These approaches are all based on a deterministic shortest path assumption to map trip travel time observations to the network representation. \cite{bertsimas2019travel} show that their approach outperforms the one proposed in \cite{ZhanEtAl13}. More recently, \cite{GhanKouv22} assume that arc travel times follow a Gaussian distribution. They estimate the means and variances of these distributions using maximum likelihood while also modeling spatial correlation. However, there are scalability concerns, as they report results only for a small part of the Manhattan network, and they do not benchmark their approach against \cite{bertsimas2019travel}.}

\rev{In sum, to our knowledge, our study is the first to address the simultaneous estimation problem for large-scale road traffic networks. Most of the literature on travel time estimation focuses on predicting travel time, or ETA, for a given trip, which is a short-term prediction problem. In contrast, relatively few works tackle our formulation of the travel time estimation problem, and they rely on a deterministic path choice assumption. We aim to relax this assumption by jointly estimating arc travel times and a probabilistic path choice model. Given our focus on large-scale networks, we benchmark the performance of our travel time estimates and computing times against \cite{bertsimas2019travel}. We therefore describe their method in more detail next.}

\subsection{A Method for Arc Travel Time Estimation}\label{sec:litRevBertsimas}

\cite{bertsimas2019travel} propose a methodology to find a point estimate of all arc travel times in a network based on observations of trip travel times between various origin-destination (OD) pairs. While detailed path observations are not required, the methodology can incorporate such data if available. Assuming that a path travel time estimate is the travel time of the shortest path, they estimate the parameters by minimizing divergence between estimated and observed travel times, as we describe in more detail next.

We represent the network with a graph $G=(N, A)$ where $N$ and $A$ are the set of nodes and arcs, respectively. We denote the set of all paths from origin $o\in N$ to destination $d\in N$ as $R(o,d)$. The list of all observations is denoted by $\rawnyc$ and contains tuples $(o, d, t) \in N^2\times\mathbb{R}_+$ where $t$ is the travel time from  $o$ to $d$. We denote the set of all OD pairs in $\rawnyc$ as $W$. 

\citet{bertsimas2019travel} estimate the travel time matrix $\arctt_{ij}$ for all arcs $(i,j) \in A$ using $\rawnyc$ that minimize the divergence $l(\hat{t}, t)$ between the observation $t$ and the estimate $\hat{t}$ under the assumption that $\hat{t}$ is the length of the shortest path. The length of the shortest path is obtained by having a binary variable $z_r$ for every path $r$ and linear (including big $M$) constraints. The problem is formulated as a mixed-integer program and we refer to it as BDJM \emph{model} (from the first letter of each contributing author's name):
\begin{mini*}
    {\arctt,\hbt,\boldsymbol{z}}
    {\sum_{(o,d,t)\in\rawnyc}l(\hbt_{od}, t)}
    {}
    {}
    \addConstraint{\hbt_{od}\leq}{\sum_{(i,j)\in r}\arctt_{ij}}{\quad\forall(o,d)\in W, r\in R(o,d)}
    \addConstraint{\sum_{(i,j)\in r}\arctt_{ij}\leq}{\hbt_{od} +  M(1 - \boldsymbol{z}_{r})}{\quad\forall(o,d)\in W, r\in R(o,d)}
    \addConstraint{\sum_{r\in R(o,d)} \boldsymbol{z}_r = }{1}{\quad\forall (o,d) \in W}
    \addConstraint{\boldsymbol{z}_r \in }{\{0,1\}}{\quad\forall (o,d)\in W, r\in R(o,d)}
    \addConstraint{\arctt_{ij} \geq }{0}{\quad(i,j)\in A}
    \addConstraint{\hbt_{od} \geq }{0}{\quad(o,d)\in W}.
\end{mini*}

Based on the assumption that the quality of travel time estimations is perceived on a multiplicative rather than an additive scale (meaning that observing a travel time of 30 seconds when the estimated travel time is 1 minute is as bad as observing a travel time of 30 minutes when the estimated travel time is 1 hour), they propose using the mean-squared log error (MSLE), 
\begin{equation}
    \mathrm{MSLE}(\hat{t}, t) = (\ln \hat{t} - \ln {t})^2 = \left[\ln(\hat{t}/{t}) \right]^2,
\end{equation}
as the divergence function. One advantage of using the MSLE is that it allows for aggregation of all the travel time observations for the same OD by taking their geometrical mean. This is possible because the MSLE can be decomposed into the MSLE between the prediction and the geometrical mean $\bar{t}$, plus the MSLE between the observations and the geometrical mean:
\begin{equation}
    \sum_{i=1}^n \mathrm{MSLE}(\hat{t}, \boldsymbol{t}_i) = n\mathrm{MSLE}(\hat{t}, \bar{t}) +\sum_{i=1}^n \mathrm{MSLE}(\bar{t}, \boldsymbol{t}_i).
\end{equation}
The second part of the term is independent of the optimization variables and can hence be omitted.

To this end, we introduce $n_{od}$, the number of observations for the OD $od$ and $t_{od}$, the geometrical mean of the observations from that OD.
\cite{bertsimas2019travel} introduce the convex surrogate 
\begin{equation}\label{eq:Bertsimas:surrogate}
    \max\left\{\frac{\hat{t}}{\bar{t}}, \frac{\bar{t}}{\hat{t}}\right\},
\end{equation} 
for $\mathrm{MSLE}(\hat{t}, \bar{t}),$ which they model using a second-order cone and linear constraints. A new variable 
$\boldsymbol{x}_{od}$, greater than both $\boldsymbol{\hat{t}}_{od}/\boldsymbol{\bar{t}}_{od}$ and $\boldsymbol{\bar{t}}_{od}/\boldsymbol{\hat{t}}_{od}$, is minimized. The constraint $\boldsymbol{x}_{od}\geq\boldsymbol{\bar{t}}_{od}/\boldsymbol{\hat{t}}_{od}$ \rev{is equivalent to} the second-order cone constraint
$$\boldsymbol{x}_{od} + \hat{\boldsymbol{t}}_{od} \geq \|(\boldsymbol{x}_{od} - \boldsymbol{\hat{t}}_{od}, 2\sqrt{\bar{\boldsymbol{t}}_{od}})\|$$ if all variables are nonnegative, and  $\boldsymbol{x}_{od}\geq\hat{\boldsymbol{x}}_{od}/\bar{\boldsymbol{x}}_{od}$ is a linear constraint.

The resulting estimation problem is undetermined by nature as the travel time of arcs not included in any shortest path is only bounded from below. Hence, regularization may play an important role. \citet{bertsimas2019travel} define the regularization on the sequence of arcs $(i,j)$ and $(j, k)$,
\begin{align}\label{eq:Bertsimasreg}
    \left|\frac{\arctt_{ij}}{\arcll_{ij}} - \frac{\arctt_{jk}}{\arcll_{jk}}\right|\frac{2}{\arcll_{ij} + \arcll_{jk}},
\end{align}
where $\arctt_{ij}$ and $\arcll_{ij}$ are, respectively, the travel time and the length of arc $(i,j)\in A$. The regularization for the problem is the sum of the regularization for all consecutive arcs. The intuition is that the speeds on any two adjacent arcs should not be drastically different.

Note that there is an exponential number of binary variables in BDJM model. The authors, therefore, propose an algorithm based on iterative path generation, which results in solving the second-order conic program (\ref{prog:BDJM}), assuming a fixed shortest path $r^*_{od}$ and a limited set of paths $K(o,d)\subseteq R(o,d)$: 
\begin{mini!} 
    {\arctt,\hbt,\hbx}
    {\sum_{(o,d)\in W} \boldsymbol{n}_{od} \hbx_{od} + \lambda \sum_{(i,j)\in E}\sum_{(j,k)\in E}\left|\frac{\arctt_{ij}}{\arcll_{ij}} - \frac{\arctt_{jk}}{\arcll_{jk}} \right| \frac{\rev{2}}{\arcll_{ij} + \arcll_{jk}}}
    {\label{prog:BDJM}}
    {}
    \addConstraint{\hbt_{od}\leq}{\sum_{(i,j)\in r}\arctt_{ij}}
    {\quad\forall(o,d)\in W, r\in K(o,d)}
    \addConstraint{\hbt_{od}=}{\sum_{(i,j)\in r^*_{od}}\arctt_{ij}}{\quad\forall(o,d)\in W}
    \addConstraint{\hbx_{od}\geq}{\hbt_{od} / \boldsymbol{t}_{od}}{\quad\forall (o,d) \in W}
    \addConstraint{\boldsymbol{t}_{od}\leq}{\hbt_{od}\hbx_{od}}{\quad\forall (o,d) \in W\label{prog:BDJM:xt}}
    \addConstraint{\arctt_{ij} \geq }{0}{\quad\forall (i,j)\in A }
    \addConstraint{\hbt_{od} \geq }{0}{\quad\forall (o,d) \in W}
    \addConstraint{\hbx_{od}\geq }{0}{\quad\forall (o,d) \in W},
\end{mini!}
where $\boldsymbol{x}_{od}$ becomes equal to the surrogate at the optimum.

At each step, the relaxed program (\ref{prog:BDJM}) is solved with a fixed shortest path $\boldsymbol{r}^*_{od}$, then assuming $\arctt$ is fixed, the shortest path is updated and added to $K(o,d)$. We call this the BDJM \emph{method}.

\citet{bertsimas2019travel} evaluate their method on New York City's yellow cab dataset \citep{taxi2019} limited to Manhattan and obtain high-quality results. Solving the problem is challenging because the network and the corresponding \rev{dataset} are large. We use their method and the yellow cab dataset \rev{to benchmark travel time estimation.}

\subsection{Route Choice Models \label{section:pathchoicemodel}}\label{FosgProb}

A route choice model, $\mathcal{P}(o, d; \arctt, \pmbeta)$, defines a probability distribution over all paths $r\in R(o,d)$ that is parameterized by $\pmbeta$. It is based on a network representation, and in most studies, related attributes, including travel time, are assumed to be fixed and known (i.e., a static and deterministic setting). Exceptions include studies on stochastic travel times \citep[e.g.,][]{GaoFrejBenA10,ding2018latent,mai2021routing} \rev{where travel time distributions are assumed to be known}. Key route choice modeling challenges pertain to the large size (infinite if we consider paths with cycles) of $R(o,d)$, and the similarity of paths, for instance, due to physical overlap. \cite{oyama2017markovian} reports roughly $10^{18}$ paths in a ten by ten grid which clearly shows that full path enumeration is not possible even for small networks. 

We separate existing route choice models into three broad classes. First, models that are based on restricted sets of paths. The sets can be restricted in different ways through what is often referred to as choice set generation. The models in this class can be path-based or arc-based \citep[e.g.,][]{dial1971probabilistic}. Models in the second class consider path choice as a sequence of arc-choices following a Markov decision process \citep{fosgerau2013link}. These models are based on the full network and hence do not impose any restriction on the path set. \cite{fosgerau2022perturbed} introduce perturbed utility route choice models, a model and solution method that we consider belonging to a third class. They show that route choice model parameters can be estimated using observations of flow vectors directly.

The classical way of estimating models for the first two classes is maximum likelihood estimation over path observations. The perturbed utility route choice model can advantageously be estimated by linear regression \citep{fosgerau2022perturbed}. However, we note it could also be estimated by maximum likelihood as the model can produce path choice probabilities. In terms of observations for maximum likelihood estimation, paths can be fully, or partially observed \citep[e.g.,][]{BierFrej08, mai2023estimation}. Partial observations may arise for various reasons, for instance, due to the measurement method (e.g., cameras placed in various intersections in the cities), measurement errors (GPS tracking is not always accurate, especially in dense urban zones), or low frequency of measurement (it might be expensive to measure the location of the car at high frequency). In such settings, \cite{mai2023estimation} show that a model can still be estimated using gradient estimation and sampling.

In this paper, we limit our methodology to route choice models whose parameters are estimated by maximum likelihood. \rev{While this encompasses many route choice models, it excludes certain ones, for example generative adversarial networks \citep[such as][]{choi2021trajgail}.} Moreover, in Section~\ref{section:solution} (solution approach) and onward, we restrict ourselves to differentiable route choice models. The models in the second class are often differentiable, and so is the model of the third class \citep{fosgerau2022perturbed}. The latter is motivated by the fact that convex conic problems (including convex quadratic programs) are differentiable with respect to their decision variables. Hence the resulting Markovian policy is differentiable \citep{amos2017optnet,agrawal2019differentiable}. Models in the first class are differentiable if the choice set generation algorithm does not depend on travel time. We note that many choice set generation algorithms are based on some variant of a repeated shortest path search and hence can use distance instead of travel time. This would result in a differentiable model.

\rev{Depending on data availability, heterogeneous preferences can be modeled in different ways (referred to as the third challenge in Section~\ref{sec:intro}). While any stochastic model -- even the simple multinomial logit -- relaxes the deterministic shortest path assumption, it does not model heterogeneous preferences in the population if the parameter values are constant across all individuals. In contrast, mixed logit models are flexible and can account for heterogeneous preferences and spatial correlation \citep[][]{Train2002,mai2018decomposition,ZimmEtAl18}. They are, however, computationally more costly to estimate because they require simulation to evaluate the likelihood function.} 

\rev{While our approach is not limited to the multinomial logit model,} we use the recursive logit model in \cite{fosgerau2013link} for the experimental section of this paper. There are \rev{three} main reasons for this choice. First, it is differentiable and fast to evaluate. \rev{Second, the variability in the New York data that we use for benchmark reasons is limited as it only covers taxi drivers.} \rev{Third}, this model, or its variants are used in many studies \citep[e.g.,][]{zhang2021type, cortes2023recursive,oyama2023capturing,knies2022recursive,gao2021estimation,koch2020review,iizuka2020cost,mai2015nested,mai2018decomposition,mai2023estimation,de2019modelling}. 

We describe our methodology in Section~\ref{section:methodology}, after the following motivating example.

\section{Motivating Example: Two Arcs, One Hard Network}\label{section:motivation}

\begin{figure}
    \centering
        \centering
    \begin{subfigure}[t]{0.45\linewidth}
    \centering
\begin{tikzpicture}
  [scale=.8,auto=left,every node/.style={circle,fill=blue!20}]
  \node (o) at (1,10) {o};
  \node (d) at (9,10) {d};
  
  \draw [->] (o) to [bend left] node[midway, fill=none, draw=none] {$t_1$} (d) ;
  \draw [->] (o) to [bend right] node[midway, below,fill=none, draw=none] {$t_2$} (d) ;
\end{tikzpicture}
\caption[A two-arc network.]{A two-arc network with one OD pair. Arc travel times, $t_1$ and $t_2$, are given next to each arc.}
\label{fig:counter-graph}
\end{subfigure}
\hfill
    \begin{subfigure}[t]{0.45\linewidth}
    \centering
    \includegraphics[width=\linewidth]{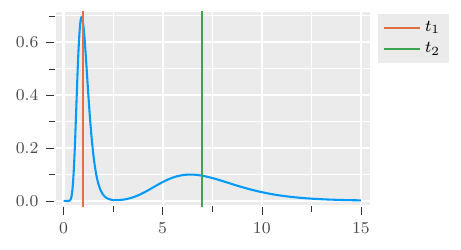}
    \caption{Probability density function of the noisy observations. Vertical lines are the ground truth travel times.\label{fig:obs_pdf2stone}}
    \end{subfigure}
\caption{Network and noise model visualization.}
\end{figure}

The motivation behind our work is the interdependence between the problems of estimating arc travel times and route choice model parameters: Arc travel times are needed to estimate the route choice model, and a route choice model is needed to estimate the arc travel times, especially when full path observations are unavailable. With a small example network (see  Figure~\ref{fig:counter-graph}), we illustrate that the two-step approach using data without path observations (i.e., only travel time is observed) cannot find the ground truth parameter values.

In this small network, the travelers at $o$ can take either the top or bottom path with respective travel times $t_1$ and $t_2$. The two paths are assumed to be independent \rev{(i.e., their utility distributions are independent and the travel times are independent as the paths do not physically overlap)} except that we enforce $t_1\leq t_2$. We assume that the probability of taking path 1 is $\exp(\beta t_1) / (\exp(\beta t_1) + \exp(\beta t_2))$ under a logit model with parameter $\beta$. The true parameters are $\beta=-0.2$, $t_1=1$, and $t_2=7$. Under these assumptions, the probability of taking path 1 is approximately $0.77$. 

We assume access to noisy travel time observations, and we do not observe the path choice. We posit a multiplicative log-normal ($\mu=0$, $\sigma=0.3$) noise model that is visualized in Figure~\ref{fig:obs_pdf2stone} and we sample $10,000$ travel time observations from this distribution. For this simple example, we compute maximum likelihood estimates by doing a grid search with a granularity of $0.1$ over the domain $(\beta,t_1,t_2)\in [-10, 0]\times(0, 10]\times(0, 10]$. We illustrate different maximum likelihood estimates by freezing certain parameters in the grid search.

Consider the two-step approach. To estimate travel time parameters we need a route choice assumption. We assume that it is close to a shortest path by fixing $\beta$ to $-10$, and we keep this value fixed when performing a grid search for $t_1$ and $t_2$. The resulting maximum likelihood estimate $(\hat{t}_1,\hat{t}_2)$ is then approximately $(1.5, 1.9)$. Keeping travel times fixed to those values in the second step yields a maximum likelihood estimate of $\hat{\beta} \approx -1.9$. All three parameter estimates are far from their ground truth values. 

If we instead consider the simultaneous estimation problem (i.e., we perform a grid search on the three parameters jointly), the maximum likelihood is then $(\hat{\beta}, \hat{t}_1,\hat{t}_2)\approx(-0.2, 1.0,7.0)$. It corresponds to the ground truth values. In this case, the high precision is due to a large \rev{dataset} for a very small network. Nevertheless, the example clearly illustrates that ignoring the interdependence between these estimation problems can lead to \rev{inaccurate} estimates.

Note that if we were iterating over the two steps in this toy example, it can be viewed as using coordinate descent \citep{Wright15}. However, the BDJM method cannot accommodate such an iterative scheme as it is assumed that the probabilities do not change when estimating the travel time. Indeed, it could cause the model to terminate at nonoptimal solutions or to diverge. Lastly, we note that existing methods cannot exploit travel time information to improve path choice models yet, as shown, travel time carries information about the choice made.

\section{Methodology}\label{section:methodology}

In this section, we introduce a mixture that models trips in the network. We analyze the properties of the mixture and propose a solution approach. Whereas our methodology differs from both \citet{bertsimas2019travel} and the route choice model estimation literature, we reuse notation from Section~\ref{section:litreview} unless stated otherwise.

\subsection{A Mixture to Model Observations at Different Granularities}

We represent every \rev{observed} trip in \rev{the network represented by} $G$ as a four-tuple 
\begin{equation}
    (o,d,t,r) \in N^2\times \mathbb{R}_+\times R(o,d),
\end{equation}
where, like before, $o$ is the origin, $d$ is the destination, $t$ is a travel time observation, and $r$ is the \rev{chosen} path. 
We assume that they follow a mixture of distributions:
\begin{subequations}\label{eqs:mixture}
\begin{align}
    &o \sim \mathcal{O}, \\
    &d \sim \mathcal{D}(o), \\
    &r \sim \mathcal{P}(o, d; \arctt, \pmbeta), \\
    &\rev{t \sim \mathcal{Y}(r; \arctt)}.
\end{align}
\end{subequations}
Origin $o$ follows a distribution $\mathcal{O}$ over the set of all nodes, and $d$ follows a distribution $\mathcal{D}(o)$ over the set of all nodes except $o$. Path $r$ follows a distribution over the set of all paths $R(o,d)$ described by a route choice model $\mathcal{P}(o, d; \arctt, \pmbeta)$ that is parameterized by $\pmbeta$. To simplify the notation we omit explicitly defining other features as they are assumed to be fixed and given. \rev{The distribution of the observed travel time $t$ is a function of $r$ and $\arctt$. We require $t$ to have a} positive support, preferably $\mathbb{R}_+,$ to ensure that all observations have a non-zero likelihood. \rev{The travel time distribution} models the difference in travel time due to factors such as fluctuating arc travel time, non-optimal habits, and speeding. 

Mixture~\eqref{eqs:mixture} builds on the assumption that observations are independently and identically distributed (i.i.d.). \rev{As highlighted in Section~\ref{section:litreview}, with the exception of a few works \citep[e.g.,][]{xu2025link}, this is a standard assumption. W}e note that it typically does not hold, since a second trip among two consecutive trips often originates from the first trip's destination. The assumption could be relaxed, for example, if we \rev{had} driver-specific information. In this case, we can make $\pmbeta$ a function of those features. \rev{We note, however, that empirical evidence suggests that such intertrip correlation is much weaker than intratrip correlation for observations occurring at similar times of travel \citep[see Figure~5 in][]{WoodEtAl17}. In this work, we assume that $\mathcal{Y}$ is a univariate distribution parameterized by $h = \sum_{(i, j) \in r} \arctt_{ij}.$ This structural form is appealing because of its simplicity but does not capture any correlation. We note that, depending on the granularity of the dataset, descriptive analyses (such as correlation analysis) may help verify the adequacy of the statistical assumptions. In our experimental results, we analyze the benefits of our approach by comparing the quality of the model output against ground-truth values (synthetic data) and the output of a benchmark method.}

Given the mixture~(\ref{eqs:mixture}), the likelihood of an observation $(o, d, t, r)$ is
\begin{equation} \label{eq:genLL}
f(o,d,r,t) = \mathbb{P}(o)\mathbb{P}(d|o)\mathbb{P}(r|o,d){\color{blue}f(t;r)},
\end{equation}
where $\color{blue}f(t;r)$ is likelihood function of $t$ parameterized by $\rev{r}$. In the following we provide several examples illustrating the flexibility of this mixture: It allows to model observations at different levels of granularity, e.g., paths or travel times are not observed, or paths are only partially observed. It also allows to combine different likelihoods, that is, combine in a singe \rev{dataset}, observations of different types.

If we do not observe any of the elements, we marginalize them. For instance, for an observation $(o, d, t)$ where the path information is missing, the likelihood is
\begin{align}
f(o,d,t) &= \sum_{r \in R(o, d)} \mathbb{P}(o)\mathbb{P}(d|o)\mathbb{P}(r|o,d)f(t;r) \\
&= \mathbb{P}(o)\mathbb{P}(d|o)\mathbb{E}_{r\sim\mathcal{P}}\left[f(t;r)\right] \label{nycll},
\end{align}
where $\mathcal{P}$ is shorthand for $\mathcal{P}(o, d; \arctt, \pmbeta)$. The corresponding log-likelihood is
\begin{equation} \label{eq:ll}
    \ln f(o, d, t) =   \ln\mathbb{P}(o) + \ln\mathbb{P}(d|o) + \ln\mathbb{E}_{r\sim\mathcal{P}}\left[f(t;r)\right].
\end{equation}

Similarly, if we observe the path but not its travel time, the likelihood is
\begin{align} \label{eq:marginalized_t}
    f(o,d,r) 
    &= \int_0^\infty f(o,d,r,t) \mathrm{d}t \\
    &= \int_0^\infty \mathbb{P}(o)\mathbb{P}(d|o)\mathbb{P}(r|o,d)f(t;r)\mathrm{d}t \\
    &= \mathbb{P}(o)\mathbb{P}(d|o)\mathbb{P}(r|o,d)\int_0^\infty f(t;r)\mathrm{d}t \label{eqs:normal_prob:int}\\
    &=  \mathbb{P}(o)\mathbb{P}(d|o)\mathbb{P}(r|o,d).
\end{align}
The equality holds because probability density functions (pdf) have unit integral. Hence,
\begin{align}\label{eq:131}
    \ln f(o,d,r) = \ln\mathbb{P}(o) + \ln\mathbb{P}(d|o) + \ln\mathbb{P}(r|o,d).
\end{align}

If we are only estimating the route choice model and assume that travel times are fixed and given, then maximizing the log-likelihood (\ref{eq:131}) corresponds to maximizing the classical log-likelihood of the route choice models. 

Furthermore, we note that it is straightforward to define a distribution over destinations even if we only observe the origin and the travel time. By definition, the probability that the destination of an observation $(o, t)$ is $d$ is
\begin{equation}
    \mathbb{P}(d|o,t) = \mathbb{P}(o,d,t)/\mathbb{P}(o,t).
\end{equation}
The term $\mathbb{P}(o,t)$ is a normalization term as it is constant with respect to the destination. 

The mixture~\eqref{eqs:mixture} can conveniently be used to model observations with partial path information. Let $r'$ be the observed part of a partially observed path, the marginalized likelihood is then
\begin{align}
    f(o,d,r',t) 
    &= \sum_{r\in R(o,d)}\mathbb{P}(o)\mathbb{P}(d|o)\mathbb{P}(r,r')f(t;r) \\
    &= \sum_{r\in R(o,d)}\mathbb{P}(o)\mathbb{P}(d|o)\mathbb{P}(r)\mathbb{P}(r'|r)f(t;r) \\
    &= \mathbb{P}(o)\mathbb{P}(d|o)\sum_{r\in R(o,d)}\mathbb{P}(r)\mathbb{P}(r'|r)f(t;r) \\
    &= \mathbb{P}(o)\mathbb{P}(d|o)\mathbb{E}_{r}[\mathbb{P}(r'|r)f(t;r)].
\end{align}
Note that $\mathbb{P}(r'|r)$ is either one or zero, and if possible, samples should be generated in a way to guarantee that no sample gets rejected, i.e., $\mathbb{P}(r'|r)$ should not be zero. In case only the path choices are of interest, the objective is to maximize
\begin{equation}\label{eq:partial}
    \ln\mathbb{E}_{r}[\mathbb{P}(r'|r)].
\end{equation}
We further discuss partially observed paths in Section~\ref{section:solution}.

It is \rev{also} possible to combine different log-likelihoods. For example, for an observation $(o_1,d_1,t_1)$ and an observation $(o_2,d_2,r_2,t_2)$, the total log-likelihood is 
\begin{equation}
    \ln f(o_1,d_1,t_1) + \ln f(o_2,d_2,r_2,t_2).
\end{equation}
In other words, we have a formulation that can mix different types of data consistently. In contrast, using a standard weighting of losses (one loss for each data type) would introduce hyperparameters to control the ratio of the losses. Using other multi-objective optimization schemes would involve similar challenges.

As discussed above, mixture~(\ref{eqs:mixture}) allows to model observations at various levels of granularity. Next, we discuss other properties and show that the approach in \cite{bertsimas2019travel} is a special case. Inspired by the literature on energy-based models (EBM) \citep{lecun2007energy}, we aim to convert a class of loss functions to distributions whose log-likelihood is the original loss function. However, unlike the EBM literature, we do not require the loss function to be an energy function. 

\begin{prop}\label{prop:loss2dist}
For any function $L:\mathbb{D}\times\Theta\rightarrow\mathbb{R}$, defined over the observation domain $\mathbb{D}$ and \rev{the parameter space $\Theta$, let, for a parameter vector $\theta$ in $\Theta$},
\begin{equation}
    Z_L(\theta) = \int_{\mathbb{D}} \exp[-L(x, \theta)]\mathrm{d}x
\end{equation}
be the partition function. Whenever $Z_L(\theta)$ converges, 
\begin{equation}
    f_L(x;\theta) = \begin{cases}
    \exp[-L(x, \theta)] / Z_L(\theta) & \text{if $x\in\mathbb{D}$} \\
    0 & \text{otherwise,}
    \end{cases}
\end{equation}
is a valid pdf. Furthermore, if $Z_L$ is a constant independent of $\theta$, the log-likelihood of the distribution corresponding to $f_L$ is the original loss function.
\end{prop}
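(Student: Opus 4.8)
The plan is to verify the two assertions separately; both reduce to elementary properties of the exponential together with the defining normalization, so the proof is short.

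\textbf{Step 1: $f_L(\cdot\,;\theta)$ is a valid density.} First I would note that $\exp[-L(x,\theta)]>0$ for every $x\in\mathbb{D}$, so the integrand defining $Z_L(\theta)$ is strictly positive; hence $Z_L(\theta)>0$ (we tacitly assume $\mathbb{D}$ carries positive Lebesgue measure, otherwise the construction is vacuous), and the hypothesis that the integral converges gives $0<Z_L(\theta)<\infty$. Therefore the ratio $f_L(x;\theta)=\exp[-L(x,\theta)]/Z_L(\theta)$ is well defined and nonnegative on $\mathbb{D}$ and equals $0$ off $\mathbb{D}$, so $f_L(\cdot\,;\theta)\ge 0$ everywhere. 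For unit mass I would simply compute $\int f_L(x;\theta)\,\mathrm{d}x=\int_{\mathbb{D}}f_L(x;\theta)\,\mathrm{d}x=\tfrac{1}{Z_L(\theta)}\int_{\mathbb{D}}\exp[-L(x,\theta)]\,\mathrm{d}x=Z_L(\theta)/Z_L(\theta)=1$, the complement of $\mathbb{D}$ contributing nothing. Nonnegativity plus unit integral is precisely the definition of a pdf, which settles the first claim.

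\textbf{Step 2: the (negative) log-likelihood is the loss.} Taking logarithms of the first branch gives $\ln f_L(x;\theta)=-L(x,\theta)-\ln Z_L(\theta)$, equivalently $-\ln f_L(x;\theta)=L(x,\theta)+\ln Z_L(\theta)$. When $Z_L$ does not depend on $\theta$ it is a fixed scalar $c:=Z_L$, so the negative log-likelihood of an observation $x$ under $f_L(\cdot\,;\theta)$ is $L(x,\theta)+\ln c$, i.e. it coincides with the loss $L$ up to an additive constant, which vanishes outright in the normalized case $c=1$ and is in any event immaterial for maximum-likelihood estimation. Summing over i.i.d. observations, the empirical negative log-likelihood and the empirical loss differ by an additive constant, hence have the same minimizers in $\theta$.

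\textbf{Anticipated obstacle.} There is essentially no technical difficulty here — the argument is a one-line normalization. The only point I would be careful to state precisely is the meaning of ``the log-likelihood is the original loss function'': it is the \emph{negative} log-likelihood, and the identification holds only modulo the constant $\ln Z_L$, which is harmless exactly because $Z_L$ is assumed independent of $\theta$. A secondary caveat worth one sentence is the degenerate regime where $Z_L(\theta)$ diverges or $\mathbb{D}$ is null: there $f_L$ is undefined, and the qualifier ``whenever $Z_L(\theta)$ converges'' is what excludes it.
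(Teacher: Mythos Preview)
Your proposal is correct and follows essentially the same approach as the paper: positivity of the exponential gives nonnegativity, dividing by the partition function gives unit mass, and taking logarithms yields $\ln f_L(x;\theta)=-L(x,\theta)-\ln Z_L(\theta)$ so that when $Z_L$ is constant in $\theta$ the log-likelihood and the loss differ only by an irrelevant additive constant. Your write-up is in fact more careful than the paper's (you make the ``up to a constant'' caveat and the degenerate-$Z_L$ caveat explicit), but the underlying argument is identical.
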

\textbf{Proof.} The density is necessarily positive as the exponent of any real number is positive. The integral of $f$ over all reals equals the integral over $\mathbb{D}$; we can then take out the partition function because it is constant with respect to $x$ and get the partition function divided by itself, which equals one. The log-likelihood is $\log f_L(x;\theta) = -L(x, \theta) - \log Z_L(\theta)$ hence maximizing the log-likelihood is equivalent to minimizing the loss function if $Z_L$ is a constant. $\hfill \square$

\begin{remark}
There are many examples of such functions. The mean squared error, $\mathrm{MSE}(x,y)=(x-y)^2$, corresponds to the normal distribution with $\sigma^2=0.5$ and the linear exponential function, $\mathrm{LINEX}(x,y)=\exp(x-y)-(x-y)-1$, corresponds to the Gumbel distribution \citep{atiyah2020fuzzy}. 
\end{remark}
\begin{remark}
Whereas from an optimization perspective multiplying by a number does not change the minimum of the problem; it changes the corresponding distribution.
\end{remark}

\rev{We note that while the partition function may not always be constant, it can sometimes be made constant using the following trick:}
\rev{\begin{prop}\label{prop:N}
Let $L(x,\theta)=L'(\log x, g(\theta))$ for some function of $\theta$, assuming the domains are compatible, if $Z_{L'}$ is constant then $Z_{L_s}$ where $L_s(x, \theta) = L(x, \theta) + \log(x)$ is constant.
\end{prop}}

\noindent \textbf{Proof.} 
\begin{align}
\int_\mathbb{D} \exp[-L_s(x, \theta)] \mathrm{d}x &= 
\int_\mathbb{D} \exp[-L(x, \theta) - \log x] \mathrm{d}x \\
&= \int_\mathbb{D} \exp(-L'(\log x, g(\theta)) - \log(x)) \mathrm{d}x \\
\intertext{Let $y=\log x$, we have: $\mathrm{d} y = \dfrac{\mathrm{d}x}{x}$ or $\exp(y)\mathrm{d} y = \mathrm{d}x$}
&= \int_{\log \mathbb{D}} \exp[-L'(y, g(\theta)) - y]\exp(y)\mathrm{d} y \\
&= \int_{\log \mathbb{D}} \exp[-L'(y, g(\theta))] \mathrm{d} y  \\
&= Z_{L'}
\end{align}
which is constant by construction of $L'$.$\hfill \square$
\qed

Unfortunately, the partition function resulting from applying Proposition~\ref{prop:loss2dist} to the MSLE is not constant. However, we introduce the small-time biased MSLE (SMSLE):
\begin{equation}
    \mathrm{SMSLE}_\gamma(t,h) = \gamma\mathrm{MSLE}(t,h) + \ln t.
\end{equation}
The SMSLE has the same derivative as the MSLE with respect to the parameter $h$ but has a constant partition function. We use this function in our experimental results. 

The next proposition links the work of \cite{bertsimas2019travel} to maximum likelihood estimation.

\begin{prop}
The solution to the BDJM model is the maximum likelihood estimate of (\ref{eqs:mixture}) where the route choice model is the shortest path algorithm and $\mathcal{O}$ and $\mathcal{D}(o)$ are not estimated.
\end{prop}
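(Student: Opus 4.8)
The plan is to instantiate the mixture (\ref{eqs:mixture}) with the two ingredients named in the statement and then strip away every term that is constant in the arc‑travel‑time vector $\arctt$, until what remains is exactly the objective that the BDJM model optimizes. Concretely, take $\mathcal{P}(o,d;\arctt,\pmbeta)$ to be the (parameter‑free) path choice model that places all its mass on a shortest $o$--$d$ path $r^*(o,d;\arctt)$ with respect to $\arctt$, and take $\mathcal{N}(\hat t)$ to be the distribution whose negative log‑density equals $\mathrm{SMSLE}_\gamma(\cdot,\hat t)$ up to an additive constant, i.e. the log‑normal with $\mu=\ln\hat t$ and $\sigma=1/\sqrt{2\gamma}$ supplied by the preceding proposition (that $f_{\mathrm{SMSLE}}$ is a log‑normal density), for any fixed $\gamma>0$. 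The data are the OD–travel‑time triples in $\rawnyc$, so the path component of the four‑tuple is unobserved and must be marginalized out.

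First I would apply the marginalization formula (\ref{nycll}): for a single observation, $f(o,d,t)=\mathbb{P}(o)\,\mathbb{P}(d\mid o)\,\mathbb{E}_{r\sim\mathcal{P}}[f(t;\hat t)]$. Because $\mathcal{P}$ is degenerate on shortest paths, the expectation collapses to $f(t;\hat t^*)$ with $\hat t^*=\sum_{(i,j)\in r^*}\arctt_{ij}$ the shortest‑path travel time; ties among shortest paths are harmless, since every shortest path yields the same $\hat t^*$ and hence the same value of $f(t;\cdot)$. Summing log‑likelihoods over the multiset $\rawnyc$ and discarding the $\ln\mathbb{P}(o)+\ln\mathbb{P}(d\mid o)$ terms — constants, since $\mathcal{O}$ and $\mathcal{D}(o)$ are not estimated — reduces the maximum likelihood problem to $\max_{\arctt}\sum_{(o,d,t)\in\rawnyc}\ln f\bigl(t;\hat t^*(o,d;\arctt)\bigr)$.

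Next I would substitute the log‑normal density, which by Proposition~\ref{prop:loss2dist} applied to $\mathrm{SMSLE}_\gamma$ gives $\ln f(t;\hat t)=-\gamma\,\mathrm{MSLE}(t,\hat t)-\ln t-\tfrac12\ln(\pi/\gamma)$, where $\ln t$ and $\tfrac12\ln(\pi/\gamma)$ do not involve $\arctt$. Hence the estimator is equivalent to $\min_{\arctt}\sum_{(o,d,t)\in\rawnyc}\mathrm{MSLE}\bigl(t,\hat t^*(o,d;\arctt)\bigr)$. Finally I would check that this is precisely what the BDJM mixed‑integer program optimizes: its binary variables select, for each OD pair, a path whose cost equals the sum of its arc travel times and is no larger than that of any other $o$--$d$ path — i.e. the big‑M and shortest‑path constraints force them to encode $r^*(o,d;\arctt)$ together with $\hat t^*$ — and its objective is the sum of $\mathrm{MSLE}(t,\hat t^*)$ over the observations, the second‑order‑cone surrogate (\ref{eq:Bertsimas:surrogate}) being merely a computational device for this objective.

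The one place that needs care is the regularization term (\ref{eq:Bertsimasreg}) in the BDJM objective: pure maximum likelihood of (\ref{eqs:mixture}) as written does not generate it. I would close this gap by reading the regularizer as a log‑prior on $\arctt$, i.e. augmenting the generative model with a density proportional to $\exp\bigl[-\lambda\sum_{(i,j,k)}\text{reg}(i,j,k)\bigr]$, so that the estimator is — more precisely — a maximum a posteriori (equivalently, penalized maximum likelihood) estimate, with the statement holding verbatim once the regularization weight is set to zero. I expect this reconciliation, together with being explicit that the claim concerns the exact $\mathrm{MSLE}$ objective rather than its conic relaxation, to be the only delicate points; everything else is substitution into results already established.
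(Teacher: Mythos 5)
Your proposal is correct and follows essentially the same route as the paper's (much terser) proof: the degenerate shortest-path distribution collapses the expectation in the marginalized likelihood to $f(t;\hat t^*)$, choosing $\mathcal{N}$ log-normal makes the negative log-density proportional to the MSLE, and the $\ln\mathbb{P}(o)+\ln\mathbb{P}(d\mid o)$ terms drop out as constants. You go further than the paper in handling the regularizer (as a log-prior, i.e.\ a MAP reading) and the conic surrogate, which the paper's proof silently ignores; these are worthwhile clarifications but do not change the argument.
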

\noindent \textbf{Proof.}
When $\mathcal{P}$ is the shortest path, the expectation in (\ref{eq:ll}) is equal to $f(t;r)$ for the shortest path $r$. If we chose $\mathcal{Y}$ to be the log-normal, $\ln f(t;r)$ is proportional to the MSLE. As such the result of the BDJM is the maximum likelihood estimate. $\hfill \square$

While it might seem possible to address the interdependence issue with observations \rev{that include} both path and travel-time information, \rev{we note that the gradients can differ. Indeed, the gradient of $\mathbb{P}$ with respect to $\arctt$ may be nonzero. It is unclear how the wrong gradient interacts with non-convexities and model misspecification.}

\subsection{Solution Approach\label{section:solution}}

Using the approach presented in the previous subsection, we create a log-likelihood function $L$ tailored to our not necessarily homogenous dataset. We then maximize the log-likelihood:
\begin{maxi}
{\arctt\in\mathbb{T},\pmbeta}{L(\arctt,\pmbeta)}{}{\label{prog:ours}}
\end{maxi}
As we show below\rev{,} $\mathbb{T}$, the domain of $\arctt$, is often a rectangular constraint meaning the projection on it is not costly to compute. For optimizing (\ref{prog:ours}), we assume that the function $\ln\mathbb{P}(r\sim\mathcal{P})$ is differentiable with respect to the travel time, and the route choice model's parameters. 
In the following, we focus on the hardest case where observations do not have accompanying path data as having the path removes the main source of complexity, namely the marginalization over the unknown.

\begin{assumption}
    We assume that $\nabla_{\pmbeta,\arctt} \ln\mathbb{P}(r)$ exists.
\end{assumption}

The gradient of the log-likelihood~(\ref{eq:ll}) is
\begin{equation}\label{eq:fodt}
\nabla \ln f(o,d,t) = \nabla\ln\mathbb{P}(o) + \nabla\ln\mathbb{P}(d|o) + \nabla\ln\mathbb{E}_{r\sim\mathcal{P}}\left[f(t;r)\right].
\end{equation}
\rev{We are interested in expressing the gradient of $\ln \mathbb{E}_{r\sim\mathcal{P}}[f(t;r)]$ as an expectation to estimate it with samples. We call estimators that use samples from $\mathcal{P}$ \emph{online}, and those that use samples from a different distribution \emph{offline}.}

\begin{prop}
The offline estimator 
\begin{equation}
    \mathbb{E}_{r\sim\mathcal{P}'}\left[
 \nabla f(t;r) +
 f(t;r)\frac{\nabla\mathbb{P}_\mathcal{P}(r)}{\mathbb{P}_{\mathcal{P}'}(r)}\right]
\end{equation}
and online estimator
\begin{equation} \label{eq:onlineestim}
    \mathbb{E}_{r\sim\mathcal{P}}\left[
 \nabla f(t;r) + f(t;r)\nabla\ln\mathbb{P}_\mathcal{P}(r)
 \right],
\end{equation}
estimate $\nabla\ln\mathbb{E}_{r\sim\mathcal{P}}\left[f(t;r)\right]$ in (\ref{eq:fodt}) for any distribution $\mathcal{P}'$ whose support is a superset of the support of $\mathcal{P}$, i.e. such that $\mathcal{P}\ll \mathcal{P}'$.
\end{prop}
\textbf{Proof.} We rewrite the gradient of the expected likelihood as
\begin{align}
 &\nabla\mathbb{E}_{r\sim\mathcal{P}}\left[f(t;r)\right]  \\&= \nabla\mathbb{E}_{r\sim\mathcal{P}'}\left[f(t;r)\frac{\mathbb{P}_\mathcal{P}(r)}{\mathbb{P}_{\mathcal{P}'}(r)}\right] \\
 &= \mathbb{E}_{r\sim\mathcal{P}'}\left[
 \nabla f(t;r)
 \frac{\mathbb{P}_\mathcal{P}(r)}{\mathbb{P}_{\mathcal{P}'}(r)}
 +
 f(t;r)\frac{\nabla\mathbb{P}_\mathcal{P}(r)}{\mathbb{P}_{\mathcal{P}'}(r)}
 \right]\label{eq:eq233}\\
  &= \mathbb{E}_{r\sim\mathcal{P}}\left[
 \left(\nabla f(t;r)
 \frac{\mathbb{P}_\mathcal{P}(r)}{\mathbb{P}_{\mathcal{P}'}(r)}
 +
 f(t;r)\frac{\nabla\mathbb{P}_\mathcal{P}(r)}{\mathbb{P}_{\mathcal{P}'}(r)}
 \right)\frac{\mathbb{P}_{\mathcal{P}'}(r)}{\mathbb{P}_\mathcal{P}(r)} \right]\\
 &= \mathbb{E}_{r\sim\mathcal{P}}\left[
 \nabla f(t;r) +
 f(t;r)\frac{\nabla\mathbb{P}_\mathcal{P}(r)}{\mathbb{P}_{\mathcal{P}}(r)}\right]\\
&=\mathbb{E}_{r\sim\mathcal{P}}\left[
 \nabla f(t;r) + f(t;r)\nabla\ln\mathbb{P}_\mathcal{P}(r)
 \right].\label{eq:online}
\end{align}
We revert back to the original measure in (\ref{eq:eq233}).  $\hfill \square$

\begin{remark}
This type of gradient of expectation is called the REINFORCE \citep{williams1992simple} or score function estimator \citep{l1990unified,gradient}. 
Furthermore, it is possible to use the same estimator for higher order gradient as shown by \citet{gradient}.
\end{remark}

Next, we show how to derive \citet{mai2023estimation}'s solution to partially observed paths using gradient estimation.  If we apply the online estimator to (\ref{eq:partial}), we obtain
\begin{equation}\label{eq:partial:grad}
    \nabla \log \mathbb{E}_r[\mathbb{P}(r'|r)] = \frac{\mathbb{E}_r[\mathbb{P}(r'|r)\nabla \ln \rev{\mathbb{P}}(r)]}{\mathbb{E}_r[\mathbb{P}(r'|r)]}.
\end{equation}
Note that $\mathbb{P}(r'|r)$ simply indicates whether the observed part of the trajectory $r'$ is compatible with the actual trajectory $r$, and it is independent of the distribution of $r$. Therefore, it does not need to be included in the gradient estimator. Furthermore, $\mathbb{E}_r[\mathbb{P}(r'|r)]$ is equal to $\mathbb{P}(r')$ thus, using the Bayes' rule, (\ref{eq:partial:grad}) is equal to 
\begin{equation}
    \frac{\mathbb{E}_r[\mathbb{P}(r'|r)\nabla \ln \mathbb{P}(r)]}{\rev{\mathbb{P}}(r')} = \frac{\sum_{r \in R(o,d)} \mathbb{P}(r) \mathbb{P}(r'|r) \nabla \ln \mathbb{P}(r)}{\rev{\mathbb{P}}(r')} = \sum_{r \in R(o,d|r')} \mathbb{P}(r|r') \nabla \ln \mathbb{P}(r),
\end{equation}
where $R(o,d|r')$ is the set of all paths from $o$ to $d$ that contain (are compatible with) $r'$. As a result, (\ref{eq:partial:grad}) is equal to $\mathbb{E}_{r|r'}[\nabla \ln\mathbb{P}(r)],$
which is the gradient of Equation (9) in \citet{mai2023estimation}.

\subsection{\rev{A Worked Example}} \label{sec:workedExample}

\begin{figure}
    \centering
    \includegraphics[width=0.5\linewidth]{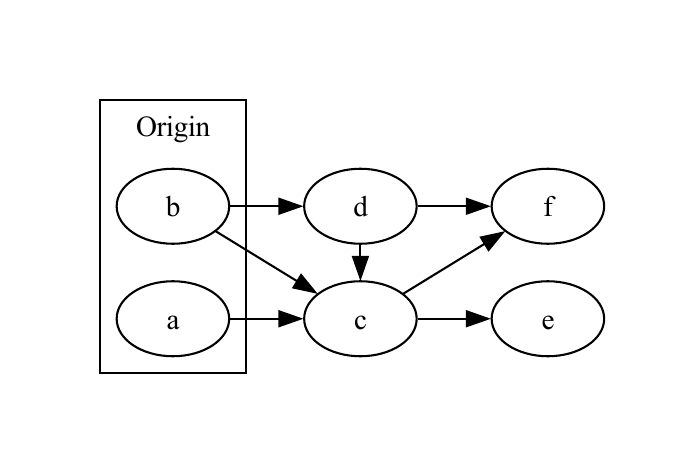}
    \caption{\rev{A simple network with a trajectory that originates at a or b, pass through c, and end at e.}}
    \label{fig:placeholder}
\end{figure}

\rev{To illustrate the previous ideas regarding different granularities, we use the simple graph depicted in Figure~\ref{fig:placeholder}. Suppose that we have a single observation that a driver started in the origin zone (at nodes a or b), passed through c and ended at e. Furthermore, suppose that we observe the travel time, and that our travel time distribution has full rank so that travel time does not impose any path limitation. Then, the set of paths compatible with the observations is 
\begin{enumerate}
    \item a\textrightarrow c\textrightarrow e,
    \item b\textrightarrow c\textrightarrow e, and
    \item b\textrightarrow d\textrightarrow c\textrightarrow e.
\end{enumerate}}

\rev{The likelihood of the observation is thus }

\rev{$ \mathbb{P}(a)\mathbb{P}(e|a)\mathbb{P}(a,c,e)f(t;r(a,c,e)) +
\mathbb{P}(b)\mathbb{P}(e|b)(
\mathbb{P}(b,c,e)f(t;r(b,c,e)) +
\mathbb{P}(b,d,c,e)f(t;r(b,d,c,e))).
$}
\rev{However, for real networks it is not possible to enumerate all compatible paths. Instead, sampling schemes and gradient estimators become a vital part of dealing with unobserved variables. This may require rejection sampling, as it is not always obvious how to efficiently sample with constraints. Still, in the logit case, constraining the sampling with an intermediate node is equivalent to sampling two paths -- one from the origin to the intermediate node and one from the intermediate node to the destination -- and concatenating them.}

\rev{This view can also be extended to handle noisy observations, for instance, given an uncertainty set around the observations, we can account for any missing data directly. Suppose that we had observed a path going from $a$ to $c$ to $e$ but the first observation was noisy, such that we are unsure whether the first observation was actually $a$ or $b$, we can use a noise model to populate $\mathbb{P}(a|\text{having observed $a$})$ and $\mathbb{P}(b|\text{having observed $a$})$.
}

\subsection{Details}

In the remainder of this section, we describe several important aspects of our solution approach.

\paragraph{Inference.} Inference depends on the performance metric. If the performance metric is likelihood, then we maximize the likelihood (i.e., find the mode). If the performance metric is the MSLE, we optimize the expected MSLE between the prediction and samples from our mixture. Indeed, the empirical loss converges to the actual expected MSLE as the number of observations tends to infinity due to the law of large numbers.

To simplify the inference process, we replace the expectation over paths with the empirical expectation over samples. To calculate the mode, we use a grid search. \citet{bertsimas2019travel} showed that for any distribution, the geometrical mean minimizes the MSLE. To minimize the MSLE on any path, we use the law of total expectation to calculate the geometrical average as
\begin{align}
\exp\left(\mathbb{E}_{r\sim\mathcal{P},t\sim\mathcal{Y}(h)}[\ln t]\right) = \exp\left(\mathbb{E}_{r\sim\mathcal{P}}\left[
\mathbb{E}_{t\sim\mathcal{Y}(h|r)}[\ln t|r]\right]\right).
\end{align}

\paragraph{Domain of $\arctt$.} We restrict the travel time estimation to a reasonable domain. For instance, the restriction can be as simple as limiting the arc travel times by bounding the speed to be between the speed limit and a lower bound. Furthermore, the domain can encode restrictions from different types of observations, such as sensors that measure the average time to traverse a specific road segment. 

We use projected gradient-based methods to maximize the log-likelihood. Projection on the rectangular (per arc) domain is inexpensive and does not require explicitly solving the Karush–Kuhn–Tucker (KKT) conditions.

\paragraph {Identifiability.} The model may not be identifiable. For instance, when maximizing (\ref{eq:marginalized_t}), i.e., travel time estimation without travel time observation, the model is undetermined if we use \rev{discrete choice} model. Indeed, we can multiply the utility parameter associated with travel time by a constant and divide all travel times by the same constant without changing the likelihood. Colinearities can generally create a large class of solutions with a very close likelihood. The bounds on $\arctt$ help mitigate this issue. 

\paragraph{Regularization.} We propose the following differentiable regularization instead of the one proposed by \citet{bertsimas2019travel} as \rev{theirs --  see (\ref{eq:Bertsimasreg}) in this paper --} is not differentiable. Our regularization is the summation of 
\begin{equation}
    \text{MSLE}(\arctt_{ij}/\arcll_{ij}, \arctt_{jk}/\arcll_{jk}) / (\arcll_{ij} + \arcll_{jk})
\end{equation}
over all 3-tuple $(i,j,k)$ such that $(i,j)$ and $(j,k)$  are arcs.

\paragraph{Optimization.} A source of complexity is the stochastic gradient when estimating (\ref{eq:fodt}). The stochastic objective makes line search and classical methods like L-BFGS-B \citep{lbfgsb} perform poorly. This is  unsurprising as the stochastic nature of the problem defies the basic assumptions of smooth optimization. Instead, we use the Adam optimizer \citep{kingma2014adam} with a projection of gradients before and a projection of variables after each update. Mini-batches can be used to improve performance, speed up convergence, and are effective for training over large datasets.

As the estimated gradient is noisy, we use a primal stopping criterion instead of using a first-order stopping criterion. We stop optimizing when we cannot improve the loss by a fixed threshold over the course of a fixed number of iterations.

\paragraph{Variance estimation and lack thereof.} It is customary to analyze the estimator's variance. However, the variance estimators \citep[p.~243]{dasgupta2008asymptotic} or ``\rev{Huber}'s sandwich estimator'' \citep{sandwich} do not have the standard interpretations in our setting for a few reasons. First, our problem is not convex, and we cannot guarantee convergence to a global optimum. Second, not all variables are free in the constrained problem. Third, the gradient we estimate is, in many cases, not exact. Last but not least, we have a large number of parameters. As such, the standard methods do not apply, and estimating the variance is beyond the scope of this work.

\paragraph{Complexities.} The methodology we describe in this section is based on relatively simple ideas. However, since it builds on top of route choice models, any complexity inherent to such models persists.
The stochasticity of the evaluation and non-convexity are other sources of complexity. \rev{Note, however, that our experiments indicate that the results are not sensitive to initialization.} By choosing a differentiable route choice model, we avoid iterative schemes over discrete variables like the BDJM method (the solution to the BDJM model) as we optimize a differentiable problem.

Writing a tractable implementation requires some care. The BDJM method transfers most of the burden of a fast implementation to a solver, as calculating shortest paths is fast. We are faced with a few implementation challenges but with enough parallelism, efficient gradient calculation using \citet{mai2018decomposition}, and efficient operations by means of sparse matrices, we are able to have a fast implementation, as we show in the following section. \rev{Finally, we note that the solution approach is agnostic to the network structure and requires only a connected graph representation with associated arc features.}

\section{Experimental Results} \label{section:experiments}

In this section, we illustrate our method using a recursive logit model \citep{fosgerau2013link}. 
We present two sets of results. In Section~\ref{sec:results}, we report results for Manhattan using the Yellow Cab dataset (NYC). The objectives are to assess the performance on the travel time estimation problem and compare it to the BDJM method. We also analyze the parameter estimates of the resulting route choice model. Since the Yellow Cab dataset does not contain path observations, we cannot use it to compare our method to the two-step approach. Indeed, the second-step -- route choice model estimation -- is not possible using existing methods. Therefore, we provide a set of results in Section~\ref{sec:synthetic} based on simulated data and compare parameter estimates to ground-truth values. We also compare results obtained with and without path observations and compare our method to the two-step approach. Next, we describe the experimental setup in more detail.

\subsection{Experimental Setup}

We split the data into three datasets: training, validation, and testing. We report the square root of MSLE (RMSLE) and we use NOMAD \citep{nomad} to search the hyperparameter space by optimizing the validation loss. In this way we found hyperparameter values with better performance compared to those recommended in \citet{bertsimas2019travel}. For the BDJM method, we use the MOSEK solver \citep{mosek} with their proposed termination criterion and a maximum of 30 iterations. We compared the performance of MOSEK, Gurobi, CPLEX and SCS and chose MOSEK because it was faster than the others.%

Our method uses the primal stopping criterion to terminate after 50 iterations with less than 0.01 total improvement in objective. We employ the online estimator (\ref{eq:onlineestim}) as it constantly outperforms the offline estimator. We use 35 samples for NYC and 100 for the synthetic set to estimate $\mathbb{E}_r[f(t;r)]$.

There are five features in our recursive logit route choice model. The first two relate to arc travel time in minutes: We partition the travel time matrix into two disjoint matrices such that the sum of those two matrices equals the original one, and the elementwise multiplication produces a zero matrix. The travel times for secondary streets are stored in the second matrix and the rest in the first. The third feature indicates red lights, stops, or intersections. Lastly, we have one indicator for left turns and one for U-turns. We can add left and U-turns by extending the current state with the knowledge of the previous state, i.e., arcs become nodes in a lifted graph. We design the features such that discretization does not affect path probabilities. We fix the value of the $\beta_\text{U-turn}$ to -5, similar to \citet{fosgerau2013link}. The arc utilities are linear in the parameters.

The methods are coded in Julia \citep{julia}. The code is publicly available (we will add the GitHub link after the double-blind review process is over). More details about data processing are given in Appendix~\ref{appendix:data}.

\subsection{Synthetic Data} \label{sec:synthetic}

In this section, we analyze results on data that has been simulated on a grid network with a route choice model that we posit (see Appendix~\ref{appendix:data} for details). Since we have access to observations at different levels of granularity, we can choose to use path observations or only observed OD times (as in the Yellow Cab data). This results in different versions of our method: with/without regularization (reg.) and with/without path observations.

We start with a qualitative analysis. Figure~\ref{fig:my_labelsd} displays a visualization of $\arctt$ for the methods; the ground truth is shown in Figure~\ref{fig:my_labelsd:ground}. 
The RMSLE between the estimated $\hat{\arctt}$ and the ground truth is below 0.14 for our model, below 0.08 for our model when using path observations, and 0.22 for the BDJM method.
However, based on visual inspection, our estimates of $\arctt$ are closer to the ground truth despite all methods having a similar RMSLE (ours is 0.31 while BDJM is 0.33, here the metric is computed based on the travel time matrices). This highlights one of the weaknesses of measuring model performance with RMSLE. One alternative would be to compare predicted travel time for known paths, but it is not always possible.

\begin{figure}
    \centering
    \begin{subfigure}[t]{0.3\linewidth}
    \includegraphics[width=\linewidth]{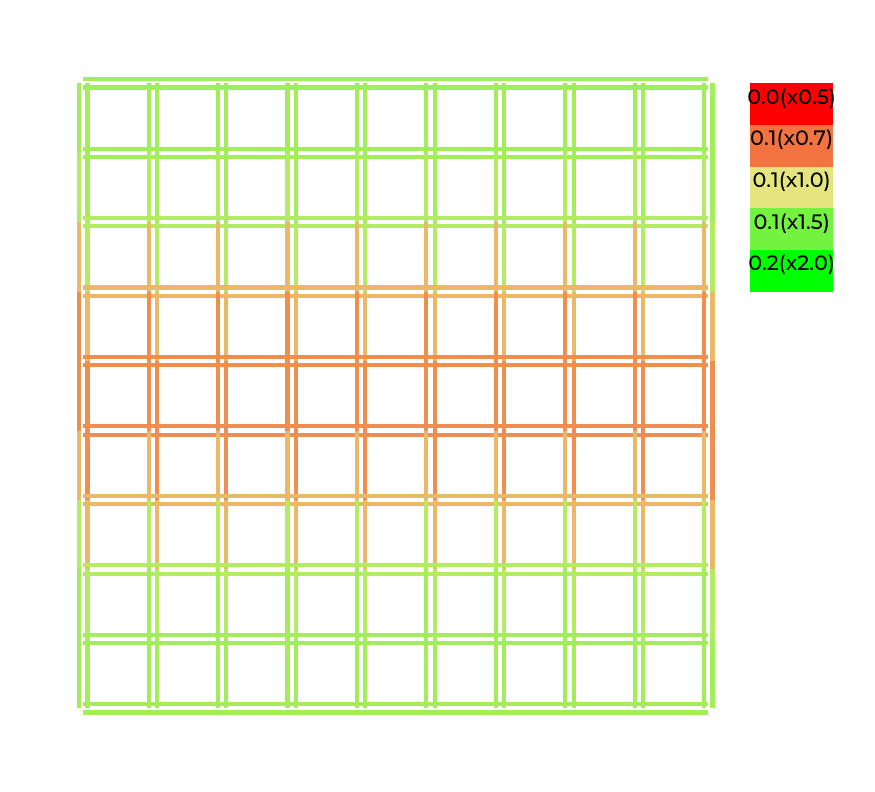}
    \caption{Ground truth}\label{fig:my_labelsd:ground}
    \end{subfigure}%
    ~\begin{subfigure}[t]{0.3\linewidth}
    \includegraphics[width=\linewidth]{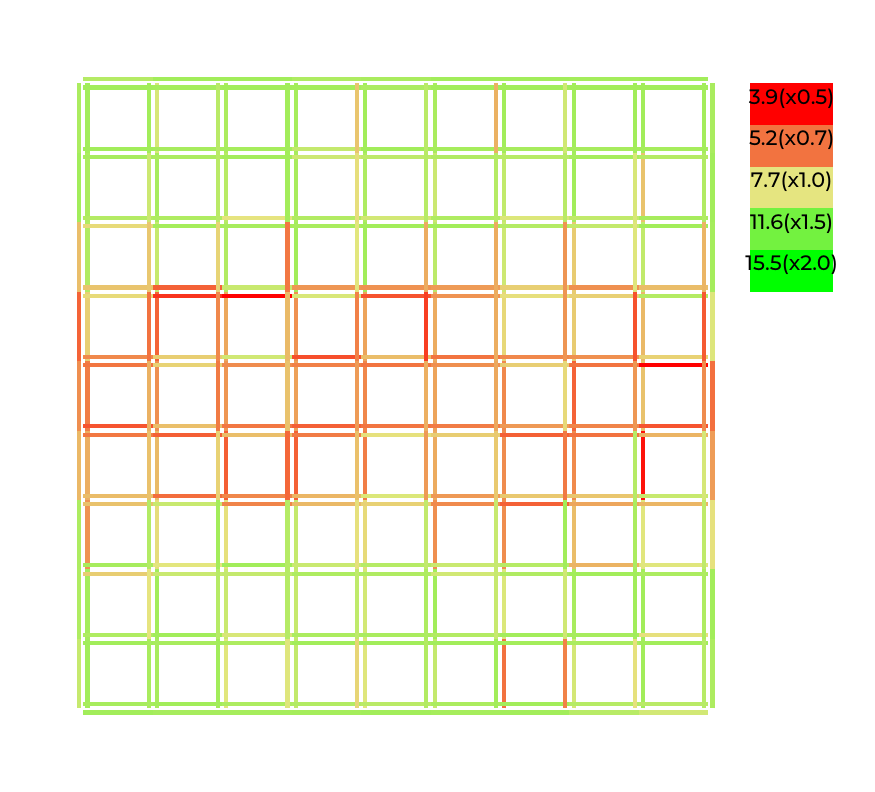}
    \caption{Ours}
    \end{subfigure}%
    ~\begin{subfigure}[t]{0.3\linewidth}
    \includegraphics[width=\linewidth]{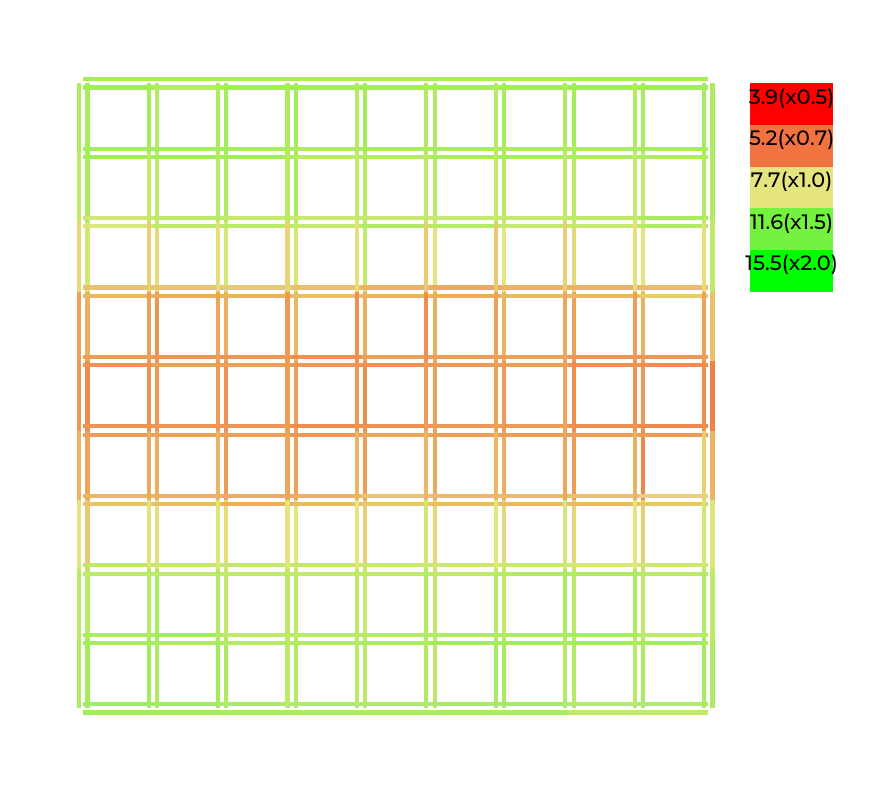}
    \caption{Ours reg.}
    \end{subfigure}
    
    \begin{subfigure}[t]{0.3\linewidth}
    \includegraphics[width=\linewidth]{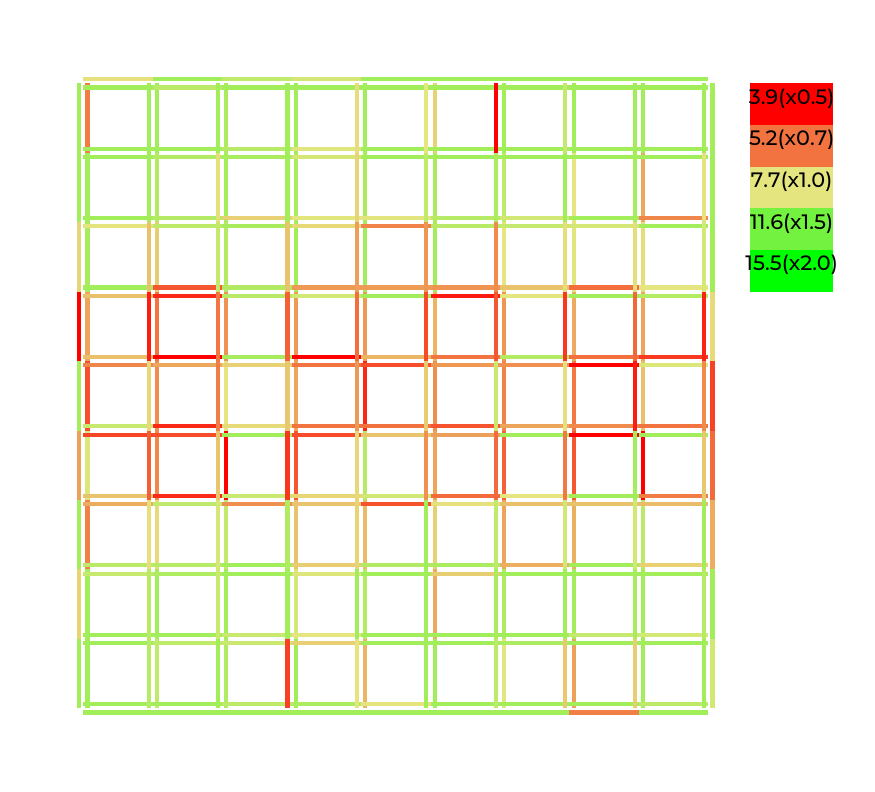}
    \caption{BDJM's method}
    \end{subfigure}%
    ~\begin{subfigure}[t]{0.3\linewidth}
    \includegraphics[width=\linewidth]{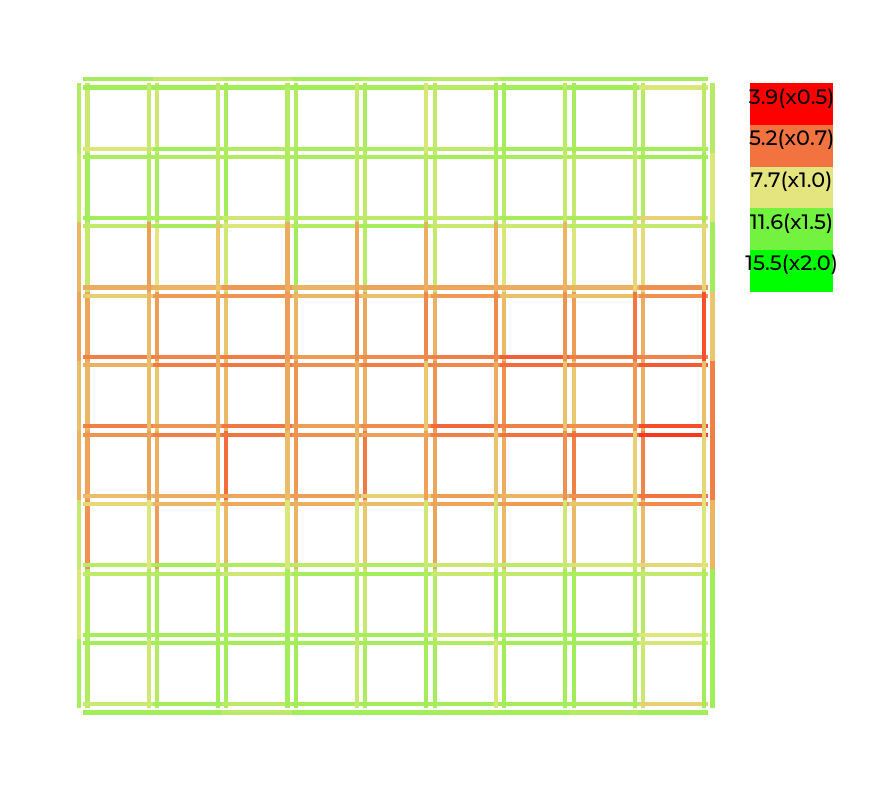}
    \caption{Ours path}
    \end{subfigure}%
    ~\begin{subfigure}[t]{0.3\linewidth}
    \includegraphics[width=\linewidth]{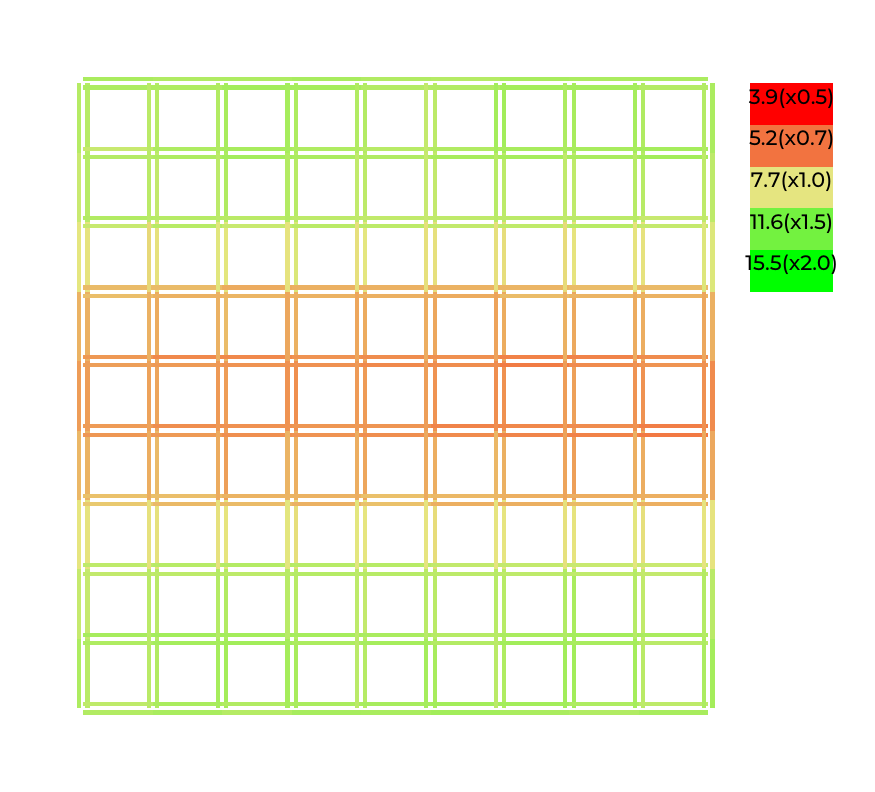}
    \caption{Ours path and reg.}
    \end{subfigure}

    \caption[Comparison of estimated arc travel time matrix.]{Comparison of estimated arc travel time matrix, same color scale for all figure. reg. stands for regularized.}
    \label{fig:my_labelsd}
\end{figure}

Next, we turn our attention to an analysis of the parameter estimates. Here the main objective is to compare our method's estimates to ground truth values, and those obtained using the two-step method, i.e., estimating a route choice model assuming travel time is fixed and given. For the two-step method, we use travel times calculated assuming 90\% of free-flow speed or the BDJM method in the first step. 

The results are reported in Table~\ref{tab:york:beta}. Note that we compare parameter ratios. The two-step method cannot retrieve the ground truth parameter ratios, irrespectively of how the travel times are estimated. Given that recursive logit has been used to generate the data, the path choice assumption in the first step is misspecified for the free-flow method and BDJM's method. The \rev{inaccuracies} in the travel time \rev{estimates are} absorbed by the travel time parameter estimate. On the contrary, our method achieves parameter estimate ratios close to the ground truth ratio. While $\pmbeta$ is well estimated in all cases, the methods using observed paths have a more accurate estimate of $\arctt$. Finally, we note that, unsurprisingly, having path observations makes the problem deterministic and easier to solve.

\begin{table}[H]
\caption[Comparison of log-likelihood.]{Comparison of log-likelihood and parameter estimates on simulated data. \label{tab:york:beta}}
    \centering\makebox[\textwidth]{
    \begin{tabular}{l
    S[table-format=-1.2]
    S[table-format=-1.2]
    }
    \toprule
    Model name & 
    {log-likelihood} &
    {$\hat{\pmbeta}_\text{travel time}/\hat \pmbeta_\text{left turn}$} \\
    \midrule
    Ground truth & -1.65 & 1.00 \\
    \midrule
    Ours                & -1.83 & 1.01 \\
    Ours with reg.      & -1.70 & 1.07 \\
    Ours with path      & -1.67 & 0.96 \\
Ours with path and reg. & -1.66 & 1.01 \\ \midrule
    Two-step (1st step: 90\% of free-flow speed)    & -2.14 & 0.78 \\
    Two-step (1st step: BDJM method)   & -1.88 & 0.72 \\
    \bottomrule
    \end{tabular}}

\end{table}

\subsection{Yellow Cab Dataset} \label{sec:results}

In Table~\ref{tab:new_york_trip}, we report performance metrics -- the number of iterations, computing time in minutes, and RMSLE --  of our and the BDJM methods for different sizes of the training set.  The first set of rows (2-6) is limited to the 6~am to 9~pm period on weekdays, whereas the second set of rows (7-11) is for the 9~am to 12~am period. The iteration budget (It.) is the training time in minutes that leads to stopping each iteration of the hyperparameter optimization loop. The algorithms may take more than It. minutes as we let the last iteration finish. We evaluate each model at most 50 times in the hyperparameter search loop. The training time columns (t.) report the average training times in minutes for the best hyperparameters. We run each experiment three times and we report the number of runs, out of the three, that converge before the time limit (\#c/3). We acknowledge that running experiments only three times is not enough to get a statistically significant variance and mean. However, there is very little variation between the runs. We note that the BDJM method and MOSEK are deterministic (to be more precise, the BDJM method is only deterministic up to tie-breaking in case of multiple shortest paths with the same length, and MOSEK is deterministic as long as the number of cores does not change), whereas our method and NOMAD depend on the seed. Due to negligible variance (less than 5e-4 in most cases), we run the BDJM method only once. The BDJM method \rev{reached} timeout in the larger set. However, the loss change in these cases is in the order of 1e-4 in the last two or three iterations, and training for longer periods did not improve the results.

\begin{table}[htbp]
\caption[Comparison with BDJM methods on the travel time estimation.]{Comparison of our and BDJM methods on the travel time estimation task on Manhattan on March 2016.}
\label{tab:new_york_trip} 
\centering
\makebox[\textwidth]{\begin{tabular}{
rr
S[table-format=1.4(2)]
r
r
S[table-format=1.4(2)]
r
r
S[table-format=1.4]
r
r}
\toprule
{size} & It. & 
{RMSLE} & {t.} & {\#c/3} &
{RMSLE} & {t.} & {\#c/3} &
{RMSLE} & {t.} & {\#c/1}  \\
{training set} & {[min]} &
{ours} & {[min]}&  &
{ours with reg.} & {[min]}&  &
{BDJM} & {[min]} &\\
\midrule
100 & 4 & 0.4343 \pm 0.0039 &  4 & 1 & 0.4051 \pm 0.0047 &  3& 2 & 0.3705 &  1 & 1\\
1,000 & 8 & 0.3918 \pm 0.0038 &  9 & 0 & 0.3891 \pm 0.0075 &  6& 2 & 0.3557 &  2 & 1\\
10,000 & 16 & 0.3472 \pm 0.0017 &  7 & 3 & 0.3485 \pm 0.0009 &  4 & 3 & 0.3571 &  18 & 0\\
100,000 & 32 & 0.3291 \pm 0.0005 &  30&2 & 0.3289 \pm 0.0004 &  31&2 & 0.3310 &  41&0\\
200,000 & 64 & 0.3264 \pm 0.0001 &  51&2 & 0.3267 \pm 0.0002 &  52&2 & 0.3285 &  80&0\\
\midrule
100 & 4 & 0.3972 \pm 0.0129 &  5&0 & 0.3900 \pm 0.0114 &  4&1 & 0.3532 &  1 & 1\\
1,000 & 8 & 0.3570 \pm 0.0015 &  9&0 & 0.3358 \pm 0.0047 &  6 & 3 & 0.3306 &  2 & 1\\
10,000 & 16 & 0.3120 \pm 0.0002 &  16&1 & 0.3113 \pm 0.0011 &  14&2 & 0.3092 &  15 & 1\\
100,000 & 32 & 0.2936 \pm 0.0001 &  30& 3 & 0.2940 \pm 0.0001 &  30&2 & 0.2983 &  41 & 0\\
200,000 & 64 & 0.2917 \pm 0.0002 &  57&2 & 0.2916 \pm 0.0001 &  54 & 3& 0.2949 &  83 & 0\\
\bottomrule
\end{tabular}}
\end{table}

In terms of training time, our method scales well with the size of the training set, taking around 30 minutes to train with 100,000 observations, roughly 10 minutes less than the BDJM method. Our method has poorer performance on the smaller datasets (100 and 1,000 observations) and we note that regularization plays an important role in data sparse settings and helps reduce the variance of the estimated $\arctt$ among experiments. Furthermore, regularization seems to help with overfitting in smaller datasets. For the larger sets (100,000 or 200,000), regularization, however, deteriorates performance (RMSLE), and NOMAD sets the regularization very low or to zero. We also note that even though our method is stochastic, the variance (in RMSLE) is relatively low. We note that in the larger datasets we often are able to terminate the algorithm within the time limit and that regularization helps with that, specially in data poor regimes. 

Next, we focus on route choice model parameter estimates resulting from our method. For the sake of illustration, we report those obtained based on 200,000 observations and the late morning data in Table~\ref{tab:beta}. We note that there is some variation between the three different runs, but the parameter ratios remain relatively stable. Noteworthy is the difference between residential and non-residential roads, where the taxis prefer the latter.

\begin{table}[htbp]
\caption[Parameters $\pmbeta$ estimated on NYC.]{Parameters $\pmbeta$ estimated on NYC using 200,000 observations from the late morning dataset.\label{tab:beta}}
\centering
\begin{tabular}{ccccc}
    \toprule 
$\hat \pmbeta_\text{non-residential}$ & 
$\hat \pmbeta_\text{residential}$ &
$\hat \pmbeta_\text{intersection}$ &
$\hat \pmbeta_\text{left turn}$ &
$\pmbeta_\text{U-turn}$ (fixed) 
\\\midrule
-3.27 & -3.88 & -0.36 & -0.43 & -5\\
-3.50 & -4.50 & -0.49 & -0.54 & -5\\
-3.39 & -4.38 & -0.32 & -0.22 & -5
\\\bottomrule
\end{tabular}
\end{table}

With our approach we tackle the more challenging task of estimating both route choice model parameters and arc travel times. Given the shortest path assumption in the BDJM method one might expect a larger improvement in RMSLE for our method compared to BDJM. In the following we analyze why the quality, as measured by RMSLE, is similar on this dataset. 
For this purpose we pseudo randomly sample 10,000 OD pairs and 300 paths between each OD pair and we compute the ratio between the travel time of each sampled path to the corresponding shortest path travel time. Figure~\ref{fig:sp_vs_sample_hist} displays the cumulative distribution of this ratio. The two plots display the same information, but the plot on the left-hand side is a zoom (x-axis range from 1.000 to 1.100). These results show that 93.3\% of sampled paths are less than 10\% longer than the shortest path. In other words, the estimated route choice model confirms that the shortest path is a good proxy for taxi data. \rev{However, compared to a deterministic shortest-path model minimizing travel time only, the route choice model indicate that} taxi drivers in the data penalize travel time on secondary (residential) roads more severely than travel time on primary (non-residential) roads.

\begin{figure}[htbp]
    \centering
    \begin{subfigure}{0.45\linewidth}
        \includegraphics[width=\linewidth]{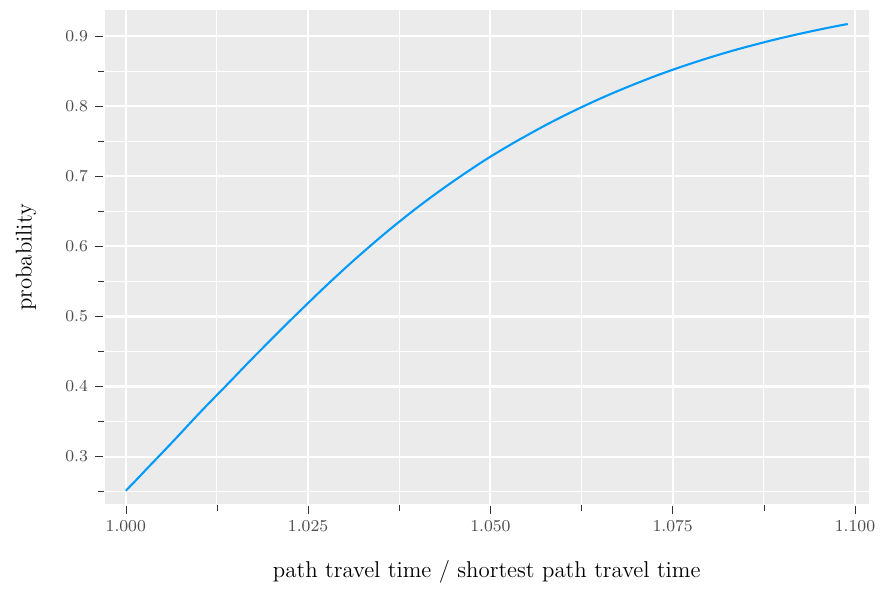}
    \end{subfigure}
    \begin{subfigure}{0.445\linewidth}
        \includegraphics[width=\linewidth]{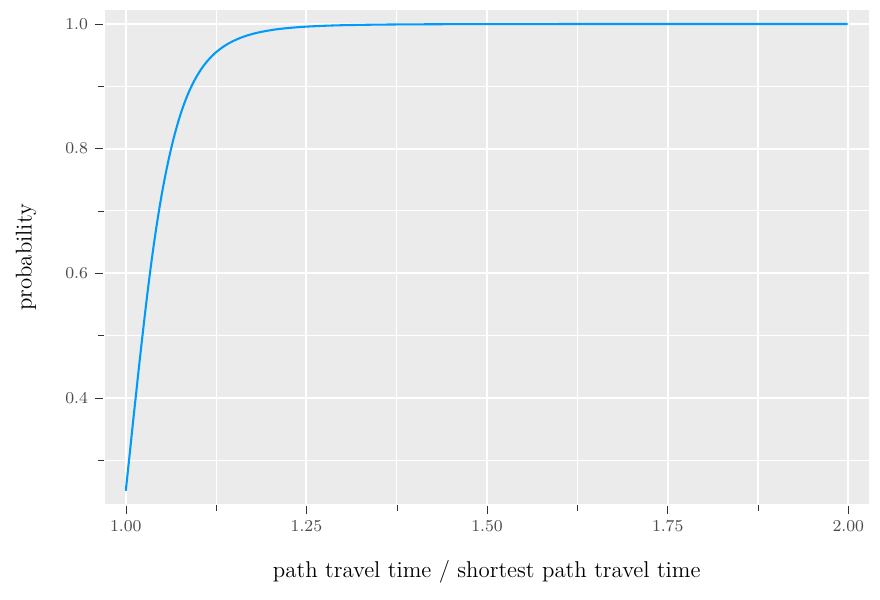}
    \end{subfigure}
    
    \caption[Comparison of path travel time and shortest path.]{Cumulative distribution of the ratio of sampled paths and shortest path travel times.}
    \label{fig:sp_vs_sample_hist}
\end{figure}

We end this section with a few remarks. Our experiments showed little sensitivity to the initialization. The effects of regularization depend on the data-set size; the more data, the less regularization is required. As expected, low learning rates lead to more steps in the optimization loops (some results are shown in Appendix~\ref{appendix:fig}). Finally we note that initialization does not have a big impact on our results. Nevertheless, we propose the parameters $\pmbeta_0=-2$, $\arctt_0=\arctt_\text{min}/0.9$, $\log\eta=-2$ and $\lambda=0$. Another viable alternative is to initialize using the solution of the two-step method and then use our method to improve the estimation results.

\section{Conclusion} \label{sec:Conclusion}

\rev{The mixture we proposed} models traffic for tactical and strategic network planning, \rev{while} ensuring compatibility with a \rev{wide} array of route choice models and loss functions. It is designed for simultaneous estimation of arc travel time and route choice model parameters. Our approach underscores the advantage of marginalizing unobserved variables and utilizing stochastic gradient estimates, leading to a maximum likelihood estimation, even when observations occur at different levels of granularity. Notably, we demonstrated that various data types can be amalgamated when computing the maximum likelihood estimate without necessitating a linear combination of losses as an objective. Speed is another advantage of optimizing this mixture. 

We illustrated with a small example as well as simulation that the conventional two-step approach for estimating route choice models may result in suboptimal outcomes. This underscores the value of our methodology.

Despite these promising outcomes, our method requires further exploration across additional route choice models, datasets, and different travel time distributions. As articulated in Section~\ref{section:methodology}, the sole precondition is the differentiability of the route choice model. However, further experiments are important to validate the practical effectiveness of our methodology across this wide range of route choice models. \rev{For example, modeling heterogeneous preferences using datasets also covering personal trips rather than only taxi drivers.}

Our choice of Manhattan as a test site was intended to align the experimental section \rev{with other studies using the same data, including} \citet{bertsimas2019travel}. \rev{Our methodology is flexible and operates on arbitrary graphs, requiring only connectivity and arc weights.} Still, we acknowledge the importance of assessing the model's performance in diverse locations, particularly those deviating from Manhattan's grid-like structure. Moreover, testing the method on a mix of observations at different levels of granularity remains an area for future exploration. Lastly, we note that further investigation is needed for estimating time-dependent travel times \rev{and modeling spatiotemporal correlation.}

\section*{Acknowledgments}
We will add the original section after the double-blind review process is over

\bibliographystyle{plainnat_custom}
\bibliography{refs}

\clearpage
\appendix

\section{Data Processing}\label{appendix:data}

For the real data, just like \citet{bertsimas2019travel}, we first downloaded a recent map of the city from Open Street Map (OSM) \citep{OpenStreetMap} and cut out the parts outside a polygon that encircles the target (Manhattan). We fill the missing speed limit using official maximum speed limits based on the type of road. We remove all \rev{non-}intersection nodes as long as adding arcs to bypass that node will not create arcs larger than 100 meters. A non-intersection node is a node that has two neighbors such that they are all part of a one-way or two-way path. We keep stop signs and red light nodes unless the crossing is at most 50 meters or 100 nodes away from a red light or stop sign. We split arcs longer than 200 meters into two arcs connected to a new node in the middle of the two nodes. Note that we keep dead ends. We repeat this process until we cannot remove any more nodes. We then take the largest connected component of the graph. This transformation keeps the topology intact as we can easily interpolate the original graph's travel time. This process removes many nodes whose sole purpose is to convey the geometry of the streets, as OSM does not support curved arcs. 

We match the origin and destination from the cab dataset using the nearest node with an upper bound of 100 meters on the distance between the observation and the node. We remove trips that are shorter than 30 seconds or longer than 3 hours or have an average speed of less than 1 meter per second or greater than the city's maximum speed on the highway (50 miles per hour in the case of NYC) on the shortest path connecting the origin and destination. We then take the data observed in a time range of day groups (workdays, weekends, etc.). We show the results in Figure~\ref{fig:nyc_map}. Note that the size of the dataset is roughly 2 gigabytes, but the compressed version with the required columns only takes approximates 150 megabytes. Since we consider a roughly 60-hour window, the data takes even less space.

We split the dataset into training and  validation sets with 100, 1,000, 10,000, 100,000, and 200,000 observations. The test set contains 100,000 observations. We use the data from April 2016 to validate the code and show the results on May 2016.

\begin{figure}[htbp]
    \centering
    \includegraphics[width=0.5\linewidth]{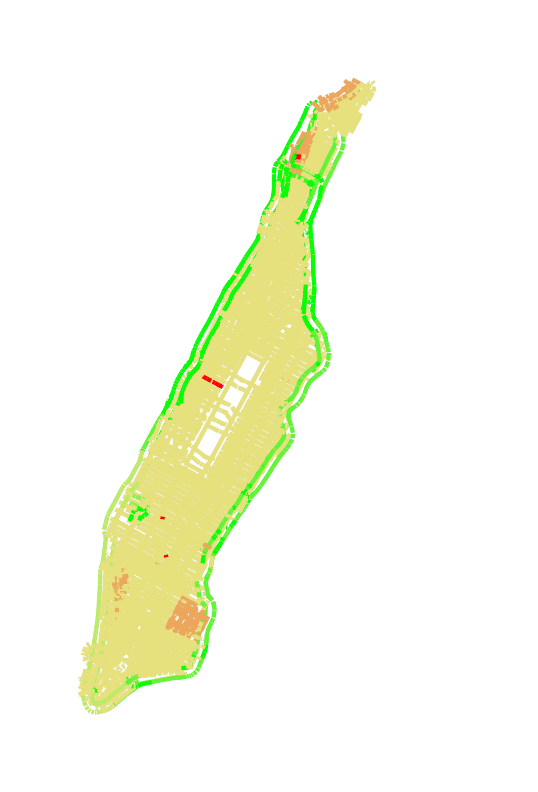}
    \caption{Map of Manhattan at free-flow speed.}
    \label{fig:nyc_map}
\end{figure}

For the synthetic tests, we simulated data on a ten-by-ten grid where each node is 600 meters away from its neighbors. The minimum and maximum speeds are 5.5 and 10 m/s, respectively. The travel time of the arc that ends in the $i$th column and $j$th row is $600/10 - 3 (j - 3300)^2 / 3300^2$ clamped between the legal speed. We use $\pmbeta=(-2, -2, -5)^\top$ to generate 10,000 samples for the train, test, and validation sets. We generate five trajectories between each (o, d) uniformly. We multiply each observed travel time by a sample from the log-normal $(0.1, \sqrt{0.1})$. We show the grid in Figure~\ref{fig:my_labelsd:ground}.

\section{Hyperparameter and Intialization Figures}\label{appendix:fig}
\newcommand{\figratio}[0]{0.85}
In this section, we analyze the effect of hyperparameters on our model. We take the best hyperparameter found on the Manhattan late-morning dataset with 10,000 observations and change one of the hyperparameters in each figure and train the model. We show both the training curve for different parameters and the final validation loss. 

Figures~\ref{fig:bsense} and~\ref{fig:Tsens} show a relatively low sensitivity to the initialization of $\pmbeta$ and $\arctt$, respectively. Figure~\ref{fig:lsense} shows that too much regularization is detrimental and that regularization is probably unnecessary when there is sufficient amount of data. As expected, Figure~\ref{fig:etasense} shows that choosing a too high learning rate can cause suboptimal solutions but choosing a too low learning rate leads to slow convergences.

\begin{figure}[htbp]
    \centering
    \begin{subfigure}{\figratio\linewidth}
    \includegraphics[width=\linewidth]{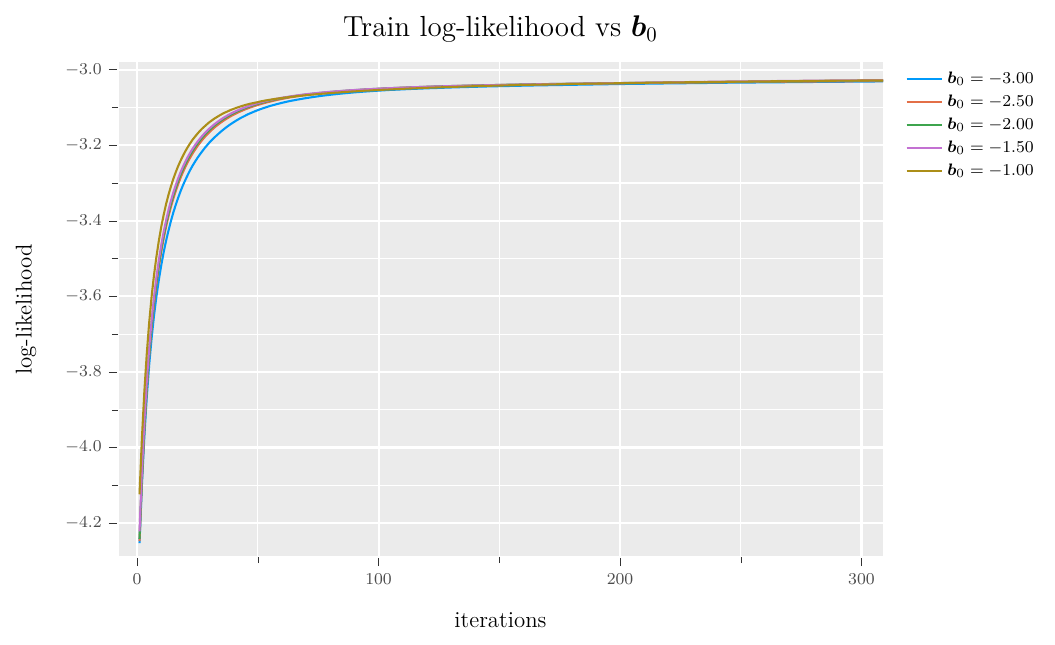}
    \end{subfigure}
    \begin{subfigure}{\figratio\linewidth}
    \includegraphics[width=\linewidth]{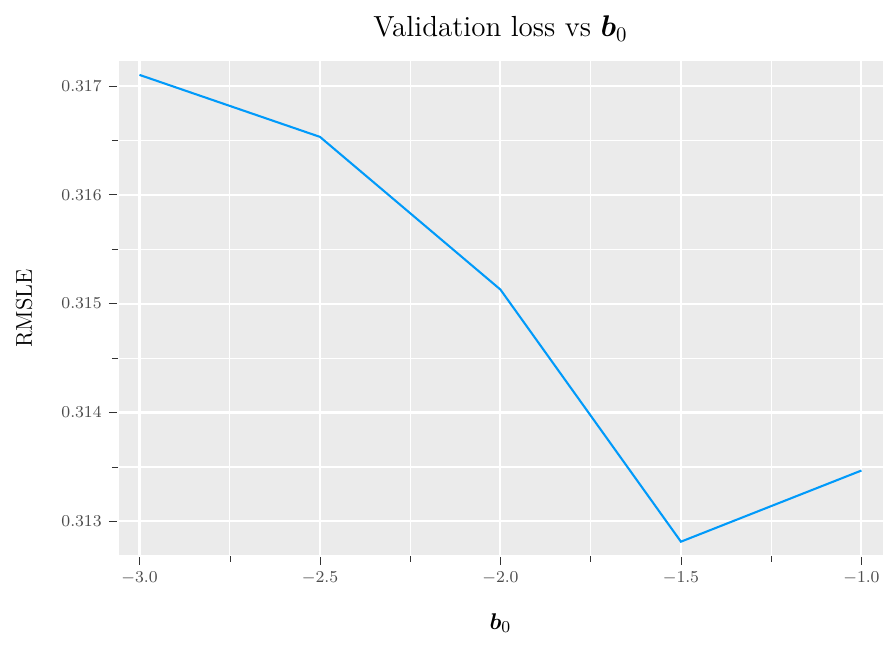}
    \end{subfigure}
    \caption{Sensitivity of our method to the choice of $\pmbeta_0$.}
    \label{fig:bsense}
\end{figure}

\begin{figure}[htbp]
    \centering
    \begin{subfigure}{\figratio\linewidth}
    \includegraphics[width=\linewidth]{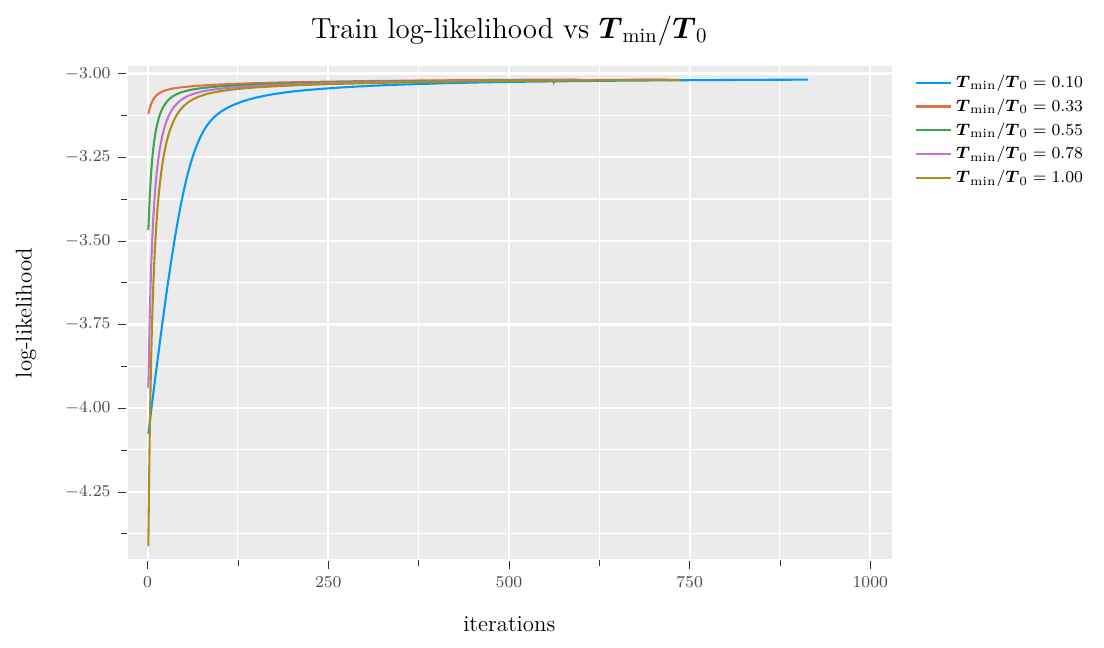}
    \end{subfigure}
    \begin{subfigure}{\figratio\linewidth}
    \includegraphics[width=\linewidth]{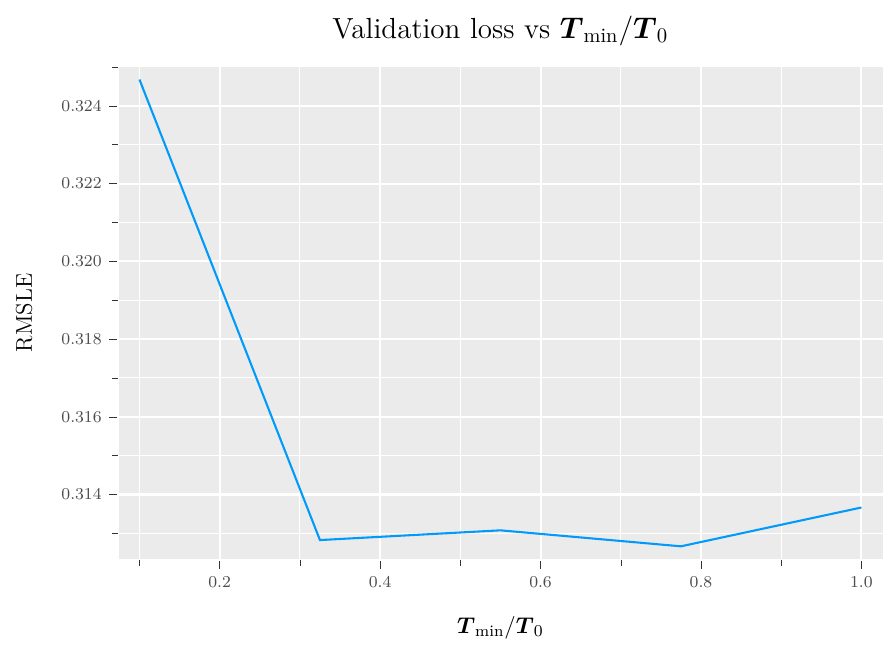}
    \end{subfigure}
    \caption{Sensitivity of our method to the choice of $\arctt_0$.}
    \label{fig:Tsens}
\end{figure}

\begin{figure}[htbp]
    \centering
    \begin{subfigure}{\figratio\linewidth}
    \includegraphics[width=\linewidth]{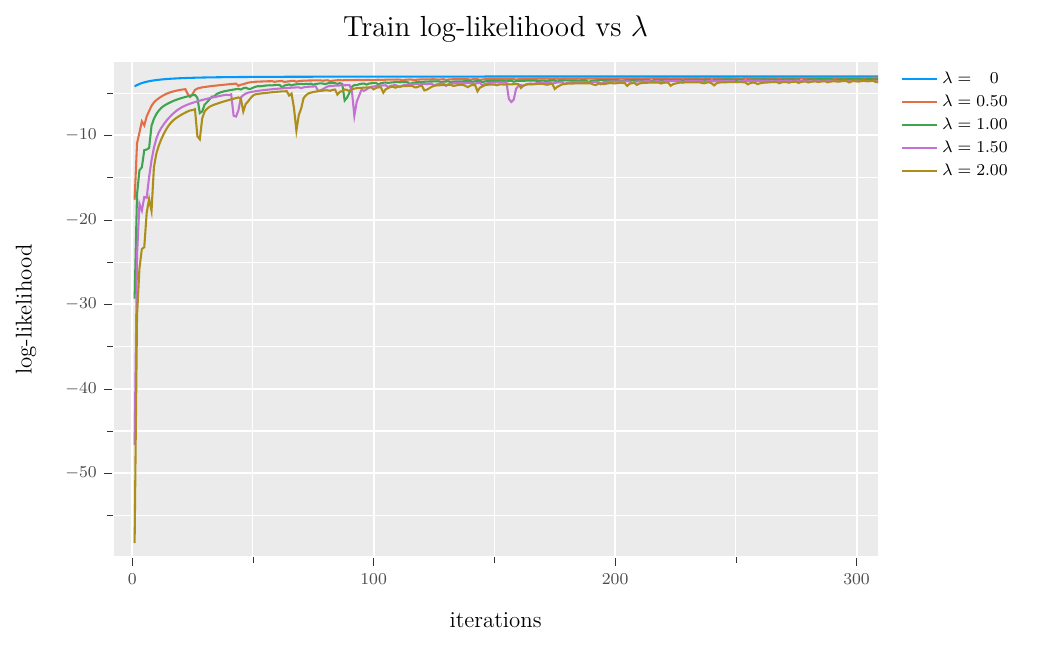}
    \end{subfigure}
    \begin{subfigure}{\figratio\linewidth}
    \includegraphics[width=\linewidth]{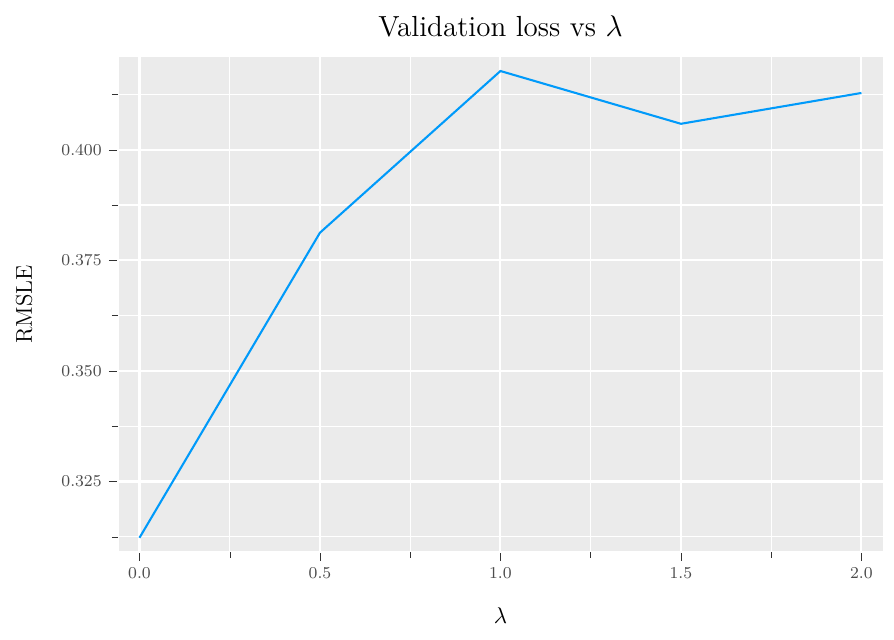}
    \end{subfigure}
    \caption{Sensitivity of our method to the choice of $\lambda$.}
    \label{fig:lsense}
\end{figure}

\begin{figure}[htbp]
    \centering
    \begin{subfigure}{\figratio\linewidth}
    \includegraphics[width=\linewidth]{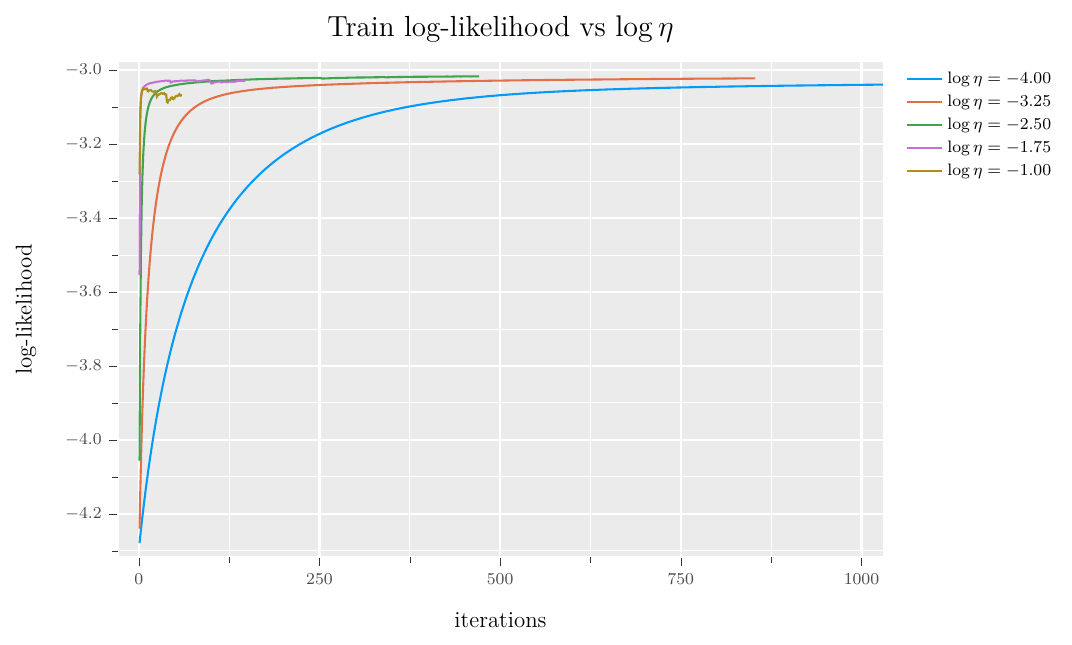}
    \end{subfigure}
    \begin{subfigure}{\figratio\linewidth}
    \includegraphics[width=\linewidth]{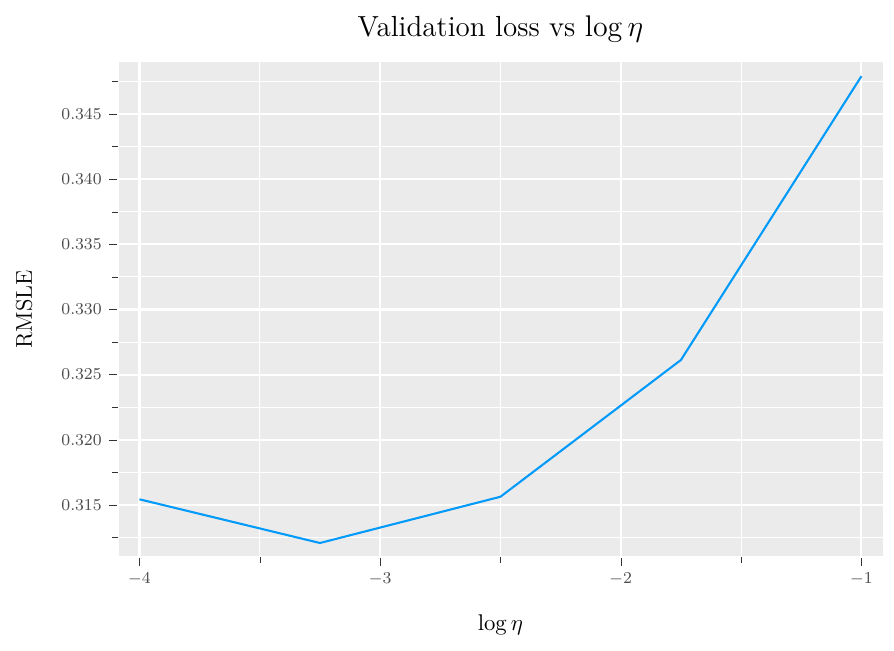}
    \end{subfigure}
    \caption[Sensitivity of our method to the choice of $\eta$.]{Sensitivity of our method to the choice of $\eta$. Logarithmic scale in the learning rate.}
    \label{fig:etasense}
\end{figure}

\section{Parameter Estimates}
Table~\ref{tab:param} contains the estimated path choice models in each settings.

\begin{table}[htbp]
\caption[$\pmbeta$ estimated on NYC.]{$\pmbeta$ estimated on NYC.}
\centering
\begin{tabular}{rccccc}
    \toprule 
size & 
$\hat \pmbeta_\text{non-residential}$ & 
$\hat \pmbeta_\text{residential}$ &
$\hat \pmbeta_\text{intersection}$ &
$\hat \pmbeta_\text{left turn}$ &
$\pmbeta_\text{U-turn}$ (fixed) 
\\\midrule
\multicolumn{6}{c}{Early morning} \\
\midrule
\multirow{3}{*}{100}
 & -2.82  & -2.64  & -0.08  & -0.97  & -5.00 \\
 & -1.39  & -1.18  & -0.39  & -0.49  & -5.00 \\
 & -2.36  & -2.75  & -0.00  & -1.46  & -5.00 \\
\midrule
\multirow{3}{*}{1,000}
 & -2.85  & -2.47  & -0.47  & -1.31  & -5.00 \\
 & -2.72  & -2.68  & -0.15  & -0.86  & -5.00 \\
 & -2.34  & -2.27  & -0.42  & -0.88  & -5.00 \\
\midrule
\multirow{3}{*}{10,000}
 & -2.77  & -3.09  & -0.39  & -0.48  & -5.00 \\
 & -2.79  & -3.21  & -0.38  & -0.55  & -5.00 \\
 & -2.44  & -2.73  & -0.42  & -0.75  & -5.00 \\
\midrule
\multirow{3}{*}{100,000}
 & -3.30  & -3.57  & -0.27  & -0.48  & -5.00 \\
 & -3.91  & -4.63  & -0.30  & -0.54  & -5.00 \\
 & -3.28  & -3.55  & -0.31  & -0.72  & -5.00 \\
\midrule
\multirow{3}{*}{200,000}
 & -4.37  & -4.67  & -0.25  & -0.77  & -5.00 \\
 & -3.41  & -3.63  & -0.32  & -0.65  & -5.00 \\
 & -3.39  & -3.64  & -0.29  & -0.56  & -5.00 \\
\midrule
\multicolumn{6}{c}{Late morning} \\
\midrule
\multirow{3}{*}{100}
 & -1.50  & -1.51  & -0.45  & -0.25  & -5.00 \\
 & -1.14  & -2.10  & -0.46  & -0.17  & -5.00 \\
 & -2.49  & -2.37  & -2.50  & -2.80  & -5.00 \\
\midrule
\multirow{3}{*}{1,000}
 & -1.71  & -1.96  & -0.45  & -0.02  & -5.00 \\
 & -0.97  & -1.11  & -1.00  & -0.44  & -5.00 \\
 & -1.32  & -1.14  & -2.01  & -0.61  & -5.00 \\
\midrule
\multirow{3}{*}{10,000}
 & -2.51  & -2.34  & -0.91  & -0.90  & -5.00 \\
 & -2.28  & -2.26  & -0.88  & -1.10  & -5.00 \\
 & -2.10  & -2.31  & -0.75  & -0.59  & -5.00 \\
\midrule
\multirow{3}{*}{100,000}
 & -2.99  & -3.83  & -0.39  & -0.40  & -5.00 \\
 & -3.20  & -4.04  & -0.32  & -0.44  & -5.00 \\
 & -3.09  & -3.91  & -0.39  & -0.53  & -5.00 \\
\midrule
\multirow{3}{*}{200,000}
 & -3.34  & -4.13  & -0.36  & -0.22  & -5.00 \\
 & -3.73  & -4.78  & -0.46  & -0.54  & -5.00 \\
 & -3.53  & -4.35  & -0.36  & -0.02  & -5.00 \\
\midrule
\end{tabular}
\caption{Route choice model parameters $\pmbeta$ estimated for each of the settings. As we rerun the optimization three times, there are three entries per dataset. \label{tab:param}}
\end{table}
\end{document}